 \newtheorem{thm}{Theorem}[section]
 \newtheorem{lm}[thm]{Lemma}
 \newtheorem{res}[thm]{Result}
 \theoremstyle{definition}
 \newtheorem{rmk}[thm]{Remark}
 \newtheorem{df}[thm]{Definition}
 \newtheorem{que}[thm]{Question}
 \newtheorem{nt}[thm]{Notation}
 \renewcommand{\phi}{\varphi}
 \newcommand{\FF}{\mathbb F}
 \newcommand{\vspan}[1]{\left \langle #1 \right \rangle}
 \newcommand{\set}[1]{ \left \{ #1 \right \} }
 \newcommand{\sett}[2]{ \left\{ #1 \, \, || \, \, #2 \right \} }
 \newcommand{\pg}{\textnormal{PG}}
 \newcommand{\one}{\mathbf 1}
 \newcommand{\zero}{\mathbf 0}
 \newcommand{\ma}{\mathcal A}
 \newcommand{\wt}{\textnormal{wt}}
 \newcommand{\col}{\textnormal{Col}}
\newcommand{\xx}{\mathbf x}
\newcommand{\yy}{\mathbf y}
\newcommand{\pgl}{\textnormal{PGL}}
\title{On additive MDS codes with linear projections}
\author{Sam Adriaensen \thanks{The authors acknowledge the support of PID2020-113082GB-I00 financed by MCIN / AEI / 10.13039/501100011033, the Spanish Ministry of Science and Innovation.} \\ {\it Vrije Universiteit Brussel} \\ \url{sam.adriaensen@vub.be} \and Simeon Ball \footnotemark[1] \\ {\it Universitat Polit\`ecnica de Catalunya} \\ \url{simeon.michael.ball@upc.edu} }
\date{}
\begin{document}

\maketitle

\begin{abstract}
We support some evidence that a long additive MDS code over a finite field must be equivalent to a linear code.
More precisely, let $C$ be an $\FF_q$-linear $(n,q^{hk},n-k+1)_{q^h}$ MDS code over $\FF_{q^h}$.
If $k=3$, $h \in \set{2,3}$, $n > \max \set{q^{h-1},h q -1} + 3$, and $C$ has three coordinates from which its projections are equivalent to linear codes, we prove that $C$ itself is equivalent to a linear code.
If $k>3$, $n > q+k$, and there are two disjoint subsets of coordinates whose combined size is at most $k-2$ from which the projections of $C$ are equivalent to linear codes, we prove that $C$ is equivalent to a code which is linear over a larger field than $\FF_q$.
\end{abstract}

\section{Introduction}

MDS codes, which are codes meeting the Singleton bound, are very useful in different applications of coding theory and cryptography, such as error-correcting codes \cite[Chapter 6]{ball20}, secret sharing \cite{pieprzykzhang}, and distributed storage \cite{dimakis}.
The classical examples are the Reed-Solomon codes.
They are constructed as follows.
Take a finite field $\FF_q$ and denote its elements as $\alpha_1, \dots, \alpha_q$.
Choose an integer $k\leq q$ and for any polynomial 
\(f(X) = \sum_{i=0}^{k-1} f_i X^i \in \FF_q[X] \)
of degree smaller than $k$, define $f(\infty) = f_{k-1}$.
Then
\[
 \sett{(f(\alpha_1), \dots, f(\alpha_q),f(\infty))}{f \in \FF_q[X], \, \deg(f) < k}
\]
is the $k$-dimensional Reed-Solomon code over $\FF_q$.
This is an MDS code of length $q+1$.
If $k \in \set{3,q-1}$ and $q$ is even, this code can be extended to an MDS code of length $q+2$.

It is generally believed that if $C$ is an $(n,q^k,d)_q$ MDS code with $d \geq 3$ and $k < q$, then $n \leq q+1$, with a few known exceptions.
This belief is referred to as the \emph{MDS conjecture}.
For linear MDS codes over $\FF_q$, with $q=p^h$ and $p$ prime, this conjecture is known to hold if one of the following conditions is met:
\begin{itemize}
    \item $q$ is prime,
    \item $q$ is not prime and $k \leq C p^{\lfloor \frac{h+1}2 \rfloor}$, for some constant $C$ that depends on the parity of $q$ and on whether $q$ is a square.
\end{itemize}
For more details, we refer to the recent survey \cite{balllavrauw19}.
The MDS conjecture has been verified for non-linear codes of alphabet size up to 8 \cite{kokkalakrotovostergard,kokkalaostergard}.

In this paper, we study additive MDS codes over finite fields.
It was shown in \cite{gamboa} that these codes are equivalent to certain geometric objects, called pseudo-arcs.
The general link between additive codes over finite fields and certain geometric objects is explained in \S \ref{SecGeom}.
Pseudo-arcs have been investigated in the context of generalised quadrangles, and we revise the most relevant results for the study of MDS codes in \S \ref{SecGen}.
In \S \ref{SecMain}, we prove the following theorems.

\begin{thm}
 \label{ThmMainK=3}
 Let $C$ be an $\FF_q$-linear $(n,q^{3h},n-2)_{q^h}$ MDS code over $\FF_{q^h}$.
 Suppose that one of the following holds.
 \begin{enumerate}
  \item $h=2$, and $n \geq 2q+3$.
  \item $h=3$, and $n \geq q^2 + 3 + \delta_{2,q}$.
 \end{enumerate}
 If $C$ has at least three coordinates from which the projections are equivalent to linear codes, then $C$ itself is equivalent to a linear code.
\end{thm}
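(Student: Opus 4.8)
The plan is to argue with the pseudo-arc attached to $C$ rather than with $C$ itself. By \cite{gamboa} (see \S\ref{SecGeom}), $C$ is described, up to equivalence, by a pseudo-arc $\mathcal A = \{E_1,\dots,E_n\}$ of $(h-1)$-dimensional subspaces of $\pg(3h-1,q)$ any three of which span the whole space, and $C$ is equivalent to a linear code precisely when $\pg(3h-1,q)$ carries a field-reduction structure --- a Desarguesian $(h-1)$-spread $F$, i.e.\ an identification with $\pg(2,q^h)$ --- with respect to which every $E_i$ is a point of $\pg(2,q^h)$. The projection of $C$ from a coordinate $i$ corresponds to projecting $\mathcal A$ from $E_i$: the subspaces $\vspan{E_i,E_j}/E_i$ with $j\neq i$ form a pseudo-arc $\bar{\mathcal A}_i$ of $n-1$ pairwise disjoint $(h-1)$-spaces in $\pg(3h-1,q)/E_i\cong\pg(2h-1,q)$, and this projection is equivalent to a linear code exactly when $\bar{\mathcal A}_i$ lies in a Desarguesian $(h-1)$-spread $\mathcal D_i$ of $\pg(2h-1,q)$. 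After renumbering we may assume $\bar{\mathcal A}_i\subseteq\mathcal D_i$ for $i\in\{1,2,3\}$.

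First I would check that each $\mathcal D_i$ is the \emph{unique} Desarguesian $(h-1)$-spread containing $\bar{\mathcal A}_i$: once sufficiently many elements of such a spread of $\pg(2h-1,q)$ are prescribed it is determined, and the hypotheses on $n$ put $n-1$ safely above that threshold --- the case $h=3$, which involves Desarguesian spreads of $\pg(5,q)$ and pseudo-reguli, being the more delicate one, whence the stronger bound and the correction term at $q=2$. Consequently each $\mathcal D_i$ equips the quotient $\pg(3h-1,q)/E_i$ with a canonical model as $\pg(1,q^h)$ in which every $\vspan{E_i,E_j}/E_i$ is a point; in particular, once an ambient model $\pg(2,q^h)$ compatible with $\mathcal D_i$ is fixed, each $(2h-1)$-space $\vspan{E_i,E_j}$ ($j\neq i$) is the field reduction of a line of $\pg(2,q^h)$ through $E_i$.

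The heart of the proof, and the main obstacle, is to build such an ambient model: a Desarguesian $(h-1)$-spread $F$ of $\pg(3h-1,q)$ through $E_1$ and $E_2$ that induces $\mathcal D_1$ and $\mathcal D_2$ in the respective quotients. A Desarguesian spread inside a quotient is very far from rigid in the ambient space, so $\mathcal D_1$ and $\mathcal D_2$ alone do not suffice; one has to use that the $n-2$ elements $E_j$ with $j\neq1,2$ are common to both projections. Their images $\vspan{E_1,E_j}/E_1$ and $\vspan{E_2,E_j}/E_2$ set up a correspondence between the pencil of lines of $\mathcal D_1$ through $E_1$ and that of $\mathcal D_2$ through $E_2$; as $n$ is large this correspondence is a projectivity, and I expect it to force $\mathcal D_1$ and $\mathcal D_2$ to be induced by a single $F$, up to a bounded discrete ambiguity (a Frobenius twist, together with --- when $h=3$ --- a choice among the finitely many Desarguesian spreads through a fixed pseudo-regulus). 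The hypothesis that the projection from the \emph{third} element $E_3$ is linear is then invoked to kill this ambiguity: only one candidate $F$ induces $\mathcal D_3$ in the quotient by $E_3$. This is precisely why three linear projections are assumed and where the length conditions are really needed.

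It then remains to verify that $F$ contains every $E_j$. For $j\in\{1,2,3\}$ this is by construction. For $j\notin\{1,2,3\}$, the image of $E_j$ in $\pg(3h-1,q)/E_1$ is a point of $\mathcal D_1$, so $E_j$ is contained in the field reduction of a line $\ell_1$ of $\pg(2,q^h)$ through $E_1$; likewise it is contained in the field reduction of a line $\ell_2$ through $E_2$. Since any three elements of $\mathcal A$ span, $E_j\not\subseteq\vspan{E_1,E_2}$, so $\ell_1\neq\ell_2$; as field reduction commutes with intersection, $E_j$ lies in the field reduction of the point $\ell_1\cap\ell_2$, which is an element of $F$, and comparing dimensions $E_j$ equals it. Hence $\mathcal A$ is an arc of $\pg(2,q^h)$, i.e.\ $C$ is equivalent to a linear $[n,3,n-2]_{q^h}$ code, as required.
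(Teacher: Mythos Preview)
Your approach is genuinely different from the paper's. The paper never touches the pseudo-arc directly in the proof of this theorem; instead it works entirely with linearised polynomials. After putting $C$ in standard form, two of the three linear projections reduce the question (Lemma~\ref{LmProp}) to showing that any pair $(f,g)$ of invertible $\FF_q$-linearised polynomials over $\FF_{q^h}$ satisfying
\[
a_j\, f\bigl(b_j\, f^{-1}(X)\bigr) \equiv g\bigl(c_j\, g^{-1}(X)\bigr),\qquad j=1,\dots,m,
\]
with the $a_j$, the $b_j$, and the $c_j$ pairwise distinct, must consist of monomials once $m=n-3$ meets the stated bound. This is achieved by a Dickson-matrix computation (Lemma~\ref{LmAtLeastOneZeroCoeff}) forcing $f$ and $g$ each to have at least one vanishing coefficient, an invariance of the hypothesis under $(f,g)\mapsto(f^{-1},f^{-1}\circ g)$ (Lemma~\ref{LmInverse}), and for $h=3$ a combinatorial analysis of minors of the Dickson matrix (Lemma~\ref{LmTwoNonZeroCoeff}) excluding exactly two non-zero coefficients. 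The thresholds $n\geq 2q+3$ and $n\geq q^2+3+\delta_{2,q}$ are precisely the inputs to the degree-counting in Lemma~\ref{LmAtLeastOneZeroCoeff}.

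Your geometric outline concentrates the entire difficulty in the step you yourself call ``the heart of the proof'', and there you write only that you ``expect'' the correspondence between $\mathcal D_1$ and $\mathcal D_2$ to force a common ambient spread $F$ up to a bounded ambiguity, and that the third projection should then pin $F$ down. That is the whole problem, not a step one can defer. A Desarguesian spread of the quotient $\pg(3h-1,q)/E_i$ lifts to an enormous family of Desarguesian spreads of $\pg(3h-1,q)$ containing $E_i$, and the correspondence $\vspan{E_1,E_j}\leftrightarrow\vspan{E_2,E_j}$ you set up is exactly what the paper encodes by the functional equations above; showing it forces a common lift is equivalent in content to the monomial argument, and you have not supplied it. Nor have you produced any mechanism yielding the specific thresholds $2q+3$ and $q^2+3+\delta_{2,q}$: your appeal to uniqueness of the Desarguesian spread through $\bar{\mathcal A}_i$ and to ``pseudo-reguli'' does not visibly give these numbers, and in any case uniqueness in the quotient is not where the obstruction lies. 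The final paragraph (once $F$ exists, every $E_j$ lies in it because $E_j$ sits in two field-reduced lines meeting in a single spread element) is correct and easy, but it is the only part of the argument you have actually carried out.
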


\begin{thm}
 \label{ThmMainKLargeIntro}
 Let $C$ be an $\FF_q$-linear $(n,q^{hk},n-k+1)_{q^h}$ MDS code over $\FF_{q^h}$, with $k > 3$.
 Suppose that there exist two subsets $A, B \subset [1,n]$ such that
 \begin{itemize}
  \item $A \cap B = \varnothing$,
  \item $|A| + |B| \leq k-2$,
  \item the projections of $C$ from $A$ and $B$ are equivalent to linear codes.
 \end{itemize}
 If $n > q+k$, then $C$ is equivalent to an $\FF_{q^s}$-linear code for some divisor $s > 1$ of $h$.
 Moreover, if $n > q^e+k$, with $e$ the largest divisor of $h$, which is strictly smaller than $h$, then $C$ is equivalent to a linear code.
\end{thm}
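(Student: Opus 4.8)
The plan is to translate everything into the geometry of pseudo-arcs. By \S\ref{SecGeom} and \cite{gamboa}, $C$ corresponds to a \emph{pseudo-arc} $\mathcal{A}=\{\pi_1,\dots,\pi_n\}$ in $\pg(hk-1,q)$: a family of $n$ subspaces of dimension $h-1$, any $k$ of which span. The projection of $C$ from a set of coordinates $S$ corresponds to the projection of $\mathcal{A}$ from $\langle\pi_i:i\in S\rangle$, and $C$ is equivalent to an $\FF_{q^s}$-linear code (for $s\mid h$) exactly when some Desarguesian $(s-1)$-spread $\mathcal{D}$ of $\pg(hk-1,q)$ has every $\pi_i$ as a union of its members, the case $s=h$ being equivalence to a linear code; I identify $(h-1)$-spaces with the corresponding $h$-dimensional subspaces of $V=\FF_q^{hk}$ throughout. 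A preliminary remark: enlarging a projection set keeps the projection equivalent to linear, because if the projection of $\mathcal{A}$ from $S$ lies in a Desarguesian spread $\mathcal{D}_S$ then any further centre of projection, being a span of members of $\mathcal{D}_S$, is a union of members of $\mathcal{D}_S$, so $\mathcal{D}_S$ projects onto a Desarguesian spread containing the further projection. Setting aside the trivial case where $A$ or $B$ is empty (there $C$ itself is equivalent to a linear code), I may therefore assume $|A|+|B|=k-2$. Write $U_A=\langle\pi_i:i\in A\rangle$, $U_B=\langle\pi_i:i\in B\rangle$, $W=\langle\pi_i:i\in A\cup B\rangle$; as these are $k-2<k$ independent subspaces, $U_A\cap U_B=\{0\}$, $W=U_A\oplus U_B$, and $\bar\Sigma:=\pg(hk-1,q)/W\cong\pg(2h-1,q)$.

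The core of the argument is to manufacture a single field structure on $\bar\Sigma$ out of the two hypotheses. Write $\bar\pi_j^A$ for the image of $\pi_j$ in $\pg(hk-1,q)/U_A$. By hypothesis $\{\bar\pi_j^A:j\notin A\}$ lies in a Desarguesian $(h-1)$-spread $\mathcal{D}_A$; since $A\cap B=\varnothing$, every $\bar\pi_i^A$ with $i\in B$ is a member of $\mathcal{D}_A$, so $W/U_A$ is a union of members of $\mathcal{D}_A$, and $\mathcal{D}_A$ projects onto a Desarguesian $(h-1)$-spread $\mathcal{D}_A'$ of $\bar\Sigma$ containing the doubly projected pseudo-arc $\mathcal{A}'':=\{\bar{\bar\pi}_j:j\notin A\cup B\}$, which has $n-k+2$ members. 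Symmetrically I obtain $\mathcal{D}_B'$ with $\mathcal{A}''\subseteq\mathcal{D}_B'$, so $\mathcal{A}''\subseteq\mathcal{D}_A'\cap\mathcal{D}_B'$. Here I invoke the decisive input: two Desarguesian $(h-1)$-spreads of $\pg(2h-1,q)$ are either equal or meet in at most $q^e+1$ members, and if they meet in more than $q+1$ members then their refinements to Desarguesian $(r-1)$-spreads coincide for some divisor $r>1$ of $h$. Since $|\mathcal{A}''|=n-k+2>q+1$, and exceeds $q^e+1$ when $n>q^e+k$, I obtain a divisor $r\ge 2$ of $h$ together with a single Desarguesian $(r-1)$-spread $\mathcal{E}$ of $\bar\Sigma$ that is the common $\FF_{q^r}$-refinement of $\mathcal{D}_A'$ and $\mathcal{D}_B'$, with $r=h$ whenever $n>q^e+k$.

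It remains to glue this structure back up to $\pg(hk-1,q)$. Let $\mathcal{D}_A^{(r)},\mathcal{D}_B^{(r)}$ be the $\FF_{q^r}$-refinements of $\mathcal{D}_A,\mathcal{D}_B$; the canonical projections $p_A,p_B$ of $\pg(hk-1,q)/U_A$, $\pg(hk-1,q)/U_B$ onto $\bar\Sigma$ respect the $\FF_{q^r}$-structures and send $\mathcal{D}_A^{(r)},\mathcal{D}_B^{(r)}$ onto $\mathcal{E}$. On underlying vector spaces, $v\mapsto(v+U_A,v+U_B)$ identifies $V$ with the fibre product $R=\{(x,y):p_A(x)=p_B(y)\}$ (injective since $U_A\cap U_B=\{0\}$, and surjective onto $R$ by a dimension count); being the kernel of the $\FF_{q^r}$-linear map $(x,y)\mapsto p_A(x)-p_B(y)$, $R$ is an $\FF_{q^r}$-subspace of the product with its product $\FF_{q^r}$-structure, so pulling back its spread gives a Desarguesian $(r-1)$-spread $\mathcal{D}$ of $\pg(hk-1,q)$. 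Finally I check that every $\pi_i$ is a union of members of $\mathcal{D}$: if $i\in A$, the image of $\pi_i$ is $\{0\}\oplus\bar\pi_i^B$, and $\bar\pi_i^B$ is a member of $\mathcal{D}_B$ (as $i\notin B$), hence an $\FF_{q^r}$-subspace lying in $\ker p_B$, so $\{0\}\oplus\bar\pi_i^B\subseteq R$ is an $\FF_{q^r}$-subspace; the case $i\in B$ is symmetric; and if $j\notin A\cup B$ then $\pi_j$ meets each of $U_A,U_B,W$ trivially, so $p_A,p_B$ restrict to $\FF_{q^r}$-linear isomorphisms of $\bar\pi_j^A$ and $\bar\pi_j^B$ onto the \emph{same} $\FF_{q^r}$-subspace $\bar{\bar\pi}_j$ of $(\bar\Sigma,\mathcal{E})$ (a member of $\mathcal{D}_A'$, hence a union of members of $\mathcal{E}$), whence the image of $\pi_j$ is the graph of an $\FF_{q^r}$-linear isomorphism $\bar\pi_j^A\to\bar\pi_j^B$, an $\FF_{q^r}$-subspace of $R$. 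Thus every $\pi_i$ is a union of members of $\mathcal{D}$, so by \S\ref{SecGeom} $C$ is equivalent to an $\FF_{q^r}$-linear code; taking $s=r>1$ gives the first assertion, and $r=h$ --- equivalence to a linear code --- holds when $n>q^e+k$.

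The step I expect to be the main obstacle is the one invoked in the middle paragraph: the description of the intersection of two Desarguesian $(h-1)$-spreads of $\pg(2h-1,q)$, and in particular the fact that sharing more than $q+1$ members forces a common $\FF_{q^r}$-refinement. This is exactly where the quantities $q+k$ and $q^e+k$ enter, and is the reason that only $\FF_{q^s}$-linearity --- not full linearity --- is forced by the weaker hypothesis. It is essentially a statement about pairs of copies of $\FF_{q^h}$ inside a matrix algebra over $\FF_q$, classified by their intersection, and is presumably among the results on pseudo-arcs and Desarguesian spreads collected in \S\ref{SecGen}; everything else is the dictionary of \S\ref{SecGeom} together with the fibre-product bookkeeping above.
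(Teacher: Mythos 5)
Your geometric strategy is coherent and genuinely different from the paper's (which works algebraically with codes in standard form, Lemma \ref{LmEqLin}, and linearised polynomials), and the reduction to $|A|+|B|=k-2$, the double projection to $\pg(2h-1,q)$, and the fibre-product gluing are all sound in outline. But the proof has a genuine gap exactly where you flag it: the ``decisive input'' --- that two Desarguesian $(h-1)$-spreads of $\pg(2h-1,q)$ are equal or meet in at most $q^e+1$ members, and that meeting in more than $q+1$ members forces a common Desarguesian $(r-1)$-spread refinement for some divisor $r>1$ of $h$ --- is asserted, not proved, and it is \emph{not} among the results collected in \S\ref{SecGen} (those concern projections of pseudo-arcs and pseudo-ovals, not intersections of spreads). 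This claim is not folklore bookkeeping; it is the mathematical crux of the theorem. After putting the two spreads in a common frame containing the two axes and the identity graph, one is a Desarguesian spread with elements the graphs of $x\mapsto ax$ and the other has elements the graphs of $x\mapsto g(bg^{-1}(x))$ for an invertible $\FF_q$-linearised $g$; their common members correspond to the $a$ for which $g(bg^{-1}(X))\equiv aX$ for some $b$, and identifying this set as the largest subfield $\FF_{q^s}$ over which $g$ is semi-linear is precisely the content of the paper's Lemma \ref{LmSemiLin} (together with Remark \ref{RmkLinCoeff}). So your argument, completed, would still have to prove that lemma or an equivalent of it; as written, the theorem's main content is outsourced to an unproven statement.

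Two smaller points. First, in the gluing step, ``the canonical projections respect the $\FF_{q^r}$-structures'' needs an argument: a Desarguesian spread determines the field multiplication only up to a field automorphism, so you must choose the $\FF_{q^r}$-scalar actions on $V/U_A$ and $V/U_B$ so that $p_A$ and $p_B$ are genuinely $\FF_{q^r}$-linear (not merely semi-linear) onto one and the same structure on $\bar\Sigma$; this is fixable by twisting one of the structures by an automorphism, but it is not automatic. Second, for comparison: the paper avoids all of the descent/gluing by staying in coordinates --- it exhibits some coefficient $a_{i,j}\notin\FF_q$ via the bound $n-|A\cup B|>n_q(k-|A\cup B|)$ on linear MDS codes over $\FF_q$ (which also yields the slightly sharper hypothesis of Theorem \ref{ThmMainKLarge}), applies Lemma \ref{LmSemiLin} to conclude $g$ is $\FF_{q^s}$-linear, and reads off $\FF_{q^s}$-linearity of the code directly; the final linearity statement then follows by induction on $s$, just as your iteration of the spread-refinement step would.
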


For a slightly stronger version of the latter theorem, see Theorem \ref{ThmMainKLarge}.

\section{Preliminaries}
 \label{SecPrel}

We will start by revising the basics of coding theory.
For a survey, see e.g.\ \cite{ball20}.
Suppose that $A$ is a finite set of size $q$.
A \emph{code} of \emph{length} $n$ over the \emph{alphabet} $A$ is a subset $C$ of $A^n$.
The \emph{(Hamming) distance} between two vectors $\xx = (x_1,\dots,x_n)$ and $\yy = (y_1,\dots,y_n)$ is given by
\[
 d(\xx,\yy) = |\sett{i \in [1,n]}{x_i \neq y_i}|.
\]
The \emph{minimum distance} of $C$ is given by
\[
 d(C) = \min \sett{d(\xx,\yy)}{\xx, \yy \in C, \, \xx \neq \yy}.
\]
If $|C| = M$ and $d(C) = d$, we call $C$ an $(n,M,d)_q$ code.

Suppose that the alphabet $A$ of $C$ is an abelian group.
Then we call $C$ \emph{additive} if $C$ is closed under addition, i.e.\
\[
 (x_1, \dots, x_n), (y_1,\dots,y_n) \in C \Longrightarrow (x_1+y_1, \dots, x_n+y_n) \in C.
\]
Moreover, suppose that the alphabet is a finite field $\FF_{q^h}$.
Then $\FF_{q^h}^n$ is also an $nh$-dimensional vector space over $\FF_q$.
If $C$ is an $\FF_q$-subspace of $\FF_{q^h}^n$ we call $C$ $\FF_q$-\emph{linear}.
An $\FF_{q^h}$-linear code over the alphabet $\FF_{q^h}$ is simply called linear.
Note that if $C$ is $\FF_q$-linear, it is also additive.

The \emph{weight} of a codeword $\xx$ in an additive code is the number of coordinate positions in which $\xx$ has a non-zero entry, and we denote it by $\wt(\xx)$.
Note that $d(\xx,\yy) = \wt(\xx-\yy)$, which implies that
\[
 d(C) = \min \sett{ \wt(\xx) }{ \xx \in C \setminus \set \zero }.
\]

\begin{df}
 \label{DefEq}
 Let $C$ and $D$ be two $\FF_q$-linear codes over $\FF_{q^h}$.
 We call $C$ and $D$ \emph{$\FF_q$-equivalent} (or simply equivalent if $\FF_q$ is clear from context) if one can be obtained from the other by
 \begin{itemize}
  \item permuting the coordinate positions,
  \item in each coordinate position, applying an $\FF_q$-linear automorphism of $\FF_{q^h}$.
 \end{itemize}
 We are allowed to apply different $\FF_q$-linear automorphisms in the different coordinate positions.
\end{df}

Note that equivalent codes have the same size and minimum distance.

One of the most important bounds on the parameters of a code is the \emph{Singleton bound}.
\begin{thm}[Singleton bound]
 Let $C$ be an $(n,M,d)_q$ code.
 Then
 \[
  M \leq q^{n-d+1}.
 \]
\end{thm}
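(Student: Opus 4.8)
The plan is to prove the bound by deleting $d-1$ coordinates and showing that the resulting projection is still injective on $C$; this gives $M = |C| \leq q^{n-d+1}$ directly, and the argument uses nothing about additivity or linearity, only the definition of Hamming distance.

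Concretely, I would consider the map $\pi \colon A^n \to A^{n-d+1}$ that forgets the last $d-1$ coordinate positions, and study its restriction to $C$. The key step is the injectivity of $\pi|_C$: if $\xx, \yy \in C$ satisfy $\pi(\xx) = \pi(\yy)$, then $\xx$ and $\yy$ agree on the first $n-d+1$ positions, so they can differ only among the remaining $d-1$ positions, whence $d(\xx,\yy) \leq n - (n-d+1) = d-1$. Since the minimum distance of $C$ equals $d$, having $d(\xx,\yy) \leq d-1$ forces $\xx = \yy$. Thus $\pi|_C$ is injective, and therefore $M = |C| \leq |A^{n-d+1}| = q^{n-d+1}$, which is exactly the claimed inequality.

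There is essentially no obstacle here: the only point that merits a moment's thought is the elementary count that two vectors agreeing on a prescribed set of $n-d+1$ coordinates can differ in at most $d-1$ positions, which is strictly below $d$. The degenerate situations — $|C| \leq 1$, or $d = n$ so that $\pi$ retains a single coordinate — are handled by the same argument without any modification.
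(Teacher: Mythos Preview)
Your argument is correct and is in fact the standard proof of the Singleton bound. Note, however, that the paper does not supply its own proof of this theorem: it is stated in the preliminaries as a classical result (with a reference to \cite{ball20}) and left unproved, so there is no ``paper's proof'' to compare against. Your puncturing/injectivity argument is exactly the textbook one and would serve perfectly well if a proof were to be included.
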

Codes attaining equality in the Singleton bound are called \emph{maximum distance separable} codes, or simply \emph{MDS} codes.
The characteristic property of an $(n,q^k,n-k+1)_q$ MDS code over alphabet $A$ is that if one chooses the entries of a vector of $A^n$ in $k$ positions, the vector can be completed in a unique way to a codeword of $C$.
One of the most central questions in the study of MDS codes is the following.

\begin{que}
Given positive integers $q$, $k$, what is the largest number $n$ such that an MDS code with parameters $(n,q^k,n-k+1)_q$ exists?
\end{que}

Quite some effort has been invested in (partially) answering this question for linear MDS codes, as was mentioned in the introduction.
In this paper, we will investigate $\FF_q$-linear MDS codes over $\FF_{q^h}$.
On one end of the spectrum, we have the case where $q$ is prime, in which case a code is $\FF_q$-linear if and only if it is additive.
On the other end of the spectrum, we have the case where $q=q^h$, in which case a code is $\FF_q$-linear if and only if it is linear.

\section{The geometry behind additive codes}
 \label{SecGeom}

A linear $(n,q^k,d)_q$ code can be linked to a point set in $\pg(k-1,q)$ and vice versa.
In \cite{gamboa}, this link was generalised to additive codes over finite fields.
In this section, we will explain this link and prove some extra properties. First we note that an additive code over $\FF_{q^h}$ must be linear over some subfield, which we fix to be $\FF_q$.

Consider an $\FF_q$-linear $(n,q^k,d)_{q^h}$ code $C$ over $\FF_{q^h}$.
Let $G \in \FF_{q^h}^{k \times n}$ be a matrix whose rows form an $\FF_q$-basis for $C$.
We call $G$ a \emph{generator matrix} of $C$ (over $\FF_q$).
Choose an $\FF_q$-basis $\alpha_1, \dots, \alpha_h$ for $\FF_{q^h}$, and write 
\(
 \boldsymbol \alpha = \begin{pmatrix} \alpha_1 & \dots & \alpha_h \end{pmatrix}^t
\).
Then the $j$\textsuperscript{th} column of $G$ is of the form $G_j \boldsymbol \alpha$ for some $G_j \in \FF_q^{k \times h}$.
Let $\pi_j$ denote the subspace of $\pg(k-1,q)$ corresponding to the column space of $G_j$.
Note that the (projective) dimension of $\pi_j$ is at most $h-1$.

\begin{rmk}
$\pi_j$ is independent of the choice of the $\FF_q$-basis $\alpha_1, \dots, \alpha_h$ of $\FF_{q^h}$.
Indeed, take another $\FF_q$-basis $\beta_1, \dots, \beta_h$ of $\FF_{q^h}$, and write 
\(
\boldsymbol \beta = \begin{pmatrix}
 \beta_1 & \dots & \beta_h
\end{pmatrix}^t
\).
Then
\(
 \boldsymbol \beta = M \boldsymbol \alpha
\)
for some non-singular $M \in \FF_q^{h \times h}$.
Therefore,
\[
 G_j \boldsymbol \alpha = (G_j M^{-1}) \boldsymbol \beta.
\]
Since $M^{-1}$ is non-singular, $G_j$ and $G_j M^{-1}$ have the same column space.
\end{rmk}

The non-zero codewords of $C$ are the vectors $\mathbf a G$, $\mathbf a \in \FF_q^{k} \setminus \set \zero$.
Note that $\mathbf a G$ has a zero in position $j$ if and only if $\mathbf a G_j \boldsymbol \alpha = 0$.
Since the entries of $\boldsymbol \alpha$ are linearly independent over $\FF_q$, this is equivalent to $\mathbf a G_j = \zero$.
This happens if and only if every column of $G_j$ is orthogonal to $\mathbf a$, or equivalently if $\pi_j$ is contained in the hyperplane $a_1 X_1 + \dots + a_k X_k = 0$ of $\pg(k-1,q)$.
Therefore,
\[
 d = \min_{\substack{\text{hyperplane } \Pi \\ \text{of } \pg(k-1,q)}} |\sett{j}{\pi_j \not \subseteq \Pi}|.
\]

\begin{df}
A \emph{projective $h-(n,k,d)_q$ system} is a multiset 
\(
 \set{\pi_1, \dots, \pi_n}
\)
of subspaces in $\pg(k-1,q)$ that have dimension at most $h-1$, span the entire space, and such that
\[
 d = \min_{\substack{\text{hyperplane } \Pi \\ \text{of } \pg(k-1,q)}} |\sett{j}{\pi_j \not \subseteq \Pi}|.
\]
We will also refer to it as a projective $h$-system.
\end{df}

As we just saw, an $\FF_q$-linear $(n,q^k,d)_q$ code over $\FF_{q^h}$ gives rise to a projective $h-(n,k,d)_q$ system.
In fact, it gives rise to several such systems, since we have freedom in choosing the generator matrix $G$.
If $G$ is one generator matrix of $C$ over $\FF_q$, the others are exactly the matrices $M G$, $M \in \text{GL}(k,q)$.
Thus, the set of all projective $h$-systems corresponding to $C$ forms an orbit under $\pgl(k,q)$.

\bigskip

Vice versa, given a projective $h-(n,k,d)_q$-system, we can construct an $\FF_q$-linear $(n,q^k,d)_{q^h}$ code by doing the following.
\begin{itemize} 
 \item Order the elements of the projective $h$-system, and denote the elements in this ordering as $\pi_1, \dots, \pi_n$.
 \item For each $\pi_j$, choose a $k \times h$ matrix $G_j$ whose column space corresponds to $\pi_j$.
 \item Choose an $\FF_q$-basis $\alpha_1, \dots, \alpha_h$ of $\FF_{q^h}$.
 \item Construct the matrix
 \[
  G = \begin{pmatrix}
   G_1 \boldsymbol \alpha & \dots & G_n \boldsymbol \alpha
  \end{pmatrix}.
 \]
 \item Let $C$ be the row space of $G$ over $\FF_q$.
\end{itemize}

We have freedom in choosing the ordering of the subspace, which corresponds to a coordinate permutation in the code.
We are also free to choose any matrix $G_j$ whose column space corresponds to $\pi_j$.
A different choice for $G_j$ corresponds to applying an $\FF_q$-linear map in the $j$\textsuperscript{th} coordinate of $C$.

Thus there is an equivalence between:
 \begin{itemize}
  \item Equivalence classes of $\FF_q$-linear codes over $\FF_{q^h}$ (where the notion of equivalence is taking from Definition \ref{DefEq}).
  \item Orbits of projective $h$-systems under $\pgl(k,q)$.
 \end{itemize}
For the case $h=1$, this correspondence is well-known, see e.g.\ \cite[\S 1.1.2]{tsfasmanvladut}

\begin{df}
 A \emph{pseudo-arc} of $(h-1)$-spaces is a set of $(h-1)$-spaces in $\pg(kh-1,q)$ such that any subset of size $k$ spans the entire space.
 If $h=1$, this is simply called an \emph{arc}.
\end{df}

Recall that if $G$ is a generator matrix of $C$, the codewords of $C$ are of the form $\mathbf a G$, and $\wt(\mathbf a G)$ equals the number of subspaces $\pi_j$ not contained in the hyperplane $a_1 X_1 + \dots + a_k X_k = 0$.
This leads to the following result.

\begin{res}[{\cite{gamboa}}]
 An $\FF_q$-linear code over $\FF_{q^h}$ is MDS if and only if its associated projective $h$-system is a pseudo-arc.
\end{res}

We now describe a way to geometrically recognise $\FF_q$-linear codes which are $\FF_q$-equivalent to linear codes.

\begin{df}
 An \emph{$(h-1)$-spread} of $\pg(kh-1,q)$ is a set of $(h-1)$-spaces in $\pg(kh-1,q)$ such that each point is contained in exactly one of these $(h-1)$-spaces.
\end{df}

An $(h-1)$-spread of $\pg(kh-1,q)$ is equivalent to a set of $h$-dimensional subspaces of $\FF_q^{kh}$ such that every non-zero vector is contained in exactly one of them.
The classical way to construct such a set is by considering $\FF_{q^h}^k$ as a $kh$-dimensional $\FF_q$-vector space.
The $\FF_{q^h}$-subspaces of $\FF_{q^h}^k$ of $\FF_{q^h}$-dimension 1 have $\FF_q$-dimension $h$, and every non-zero vector is contained in exactly one of them.
This gives us a so-called \emph{Desarguesian spread}.
In general, the Desarguesian spreads are the spreads which are $\pgl$-equivalent to this construction \cite[Corollary 3.8]{vandevoorde16}.
Desarguesian spreads can also be characterised by the fact that the set of all subspaces spanned by some elements of the spread together with the natural incidence gives us $\pg(k-1,q^h)$.
For a survey on Desarguesian spreads, we refer the reader to \cite{baderlunardon}.

The previous description can be translated into a more algebraic one.
Fix a primitive element $\omega$ of $\FF_{q^h}$.
Then $1, \omega, \dots, \omega^{h-1}$ is an $\FF_q$-basis for $\FF_{q^h}$.
For each element $\alpha \in \FF_{q^h}$, define the matrix $M(\alpha)$ as the matrix representation of the $\FF_q$-linear map 
\(
\FF_{q^h} \to \FF_{q^h}: x \mapsto \alpha x
\)
with respect to the basis $1, \omega, \dots, \omega^{h-1}$.
Alternatively, if the minimal polynomial of $\omega$ over $\FF_q$ is
\(
X^h - a_{h-1} X^{h-1} - \ldots - a_0,
\)
then
\[
 M(\omega) = \begin{pmatrix}
 0 & & & a_0 \\
 1 & \ddots & & a_1 \\
   & \ddots & 0  & \vdots \\
   & & 1 &  a_{h-1} 
 \end{pmatrix},
\]
$M(\omega^j) = M(\omega)^j$, and $M(0)$ is the zero matrix.
Given a vector $v \in \FF_{q^h}^k$,
\[
 \vspan v _{\FF_{q^h}} = \vspan{v, \omega v, \dots, \omega^{h-1} v }_{\FF_q}.
\]
If we replace each element of $\FF_{q^h}$ by its coordinate vector over $\FF_q$ w.r.t.\ the basis $1, \dots, \omega^{h-1}$, we see that
\[
 \sett{\col \begin{pmatrix} M(x_1) \\ \vdots \\ M(x_k) \end{pmatrix} }{ \begin{pmatrix} x_1 \\ \vdots \\ x_k \end{pmatrix} \in \pg(k-1,q^h) }
\]
is a Desarguesian spread of $(h-1)$-spaces in $\pg(kh-1,q)$, where $\col(M)$ denotes the column space of a matrix $M$.

\begin{lm}
 An $\FF_q$-linear $(n,q^{kh},d)_{q^h}$ code over $\FF_{q^h}$ is $\FF_q$-equivalent to a linear code if and only if its associated projective $h$-system consists of $(h-1)$-spaces is contained in a Desarguesian spread.
\end{lm}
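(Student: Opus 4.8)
The plan is to read off, from the field structure underlying a Desarguesian spread, an $\FF_{q^h}$-vector space structure on the message space of $C$ that makes every coordinate evaluation map linear. Throughout I assume $h \ge 2$, the case $h = 1$ being trivial, and I will use the fact (a matrix reformulation of the characterisation recalled in \S\ref{SecGeom}, together with \cite{vandevoorde16}) that a Desarguesian spread $\mathcal{S}$ of $\pg(kh-1,q)$ is the same thing as the set of $h$-dimensional $N$-invariant subspaces of $\FF_q^{kh}$ for some $N \in \textnormal{GL}(kh,q)$ whose minimal polynomial over $\FF_q$ is irreducible of degree $h$; then $\FF_q[N] \cong \FF_{q^h}$, the characteristic polynomial of $N$ is forced to be the $k$-th power of its minimal polynomial, and $\FF_q^{kh}$ is a free $\FF_q[N]$-module of rank $k$. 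Crucially, $N$ and $N^{t}$ have the same minimal and characteristic polynomial, so $N^{t}$ again cuts out a Desarguesian spread. I also use the dictionary from \S\ref{SecGeom}: writing $\textnormal{ev}_j \colon \FF_q^{kh} \to \FF_{q^h}$ for the map sending a message to the $j$-th entry of its codeword, one has $\ker \textnormal{ev}_j = \pi_j^{\perp}$, hence $\pi_j = (\ker \textnormal{ev}_j)^{\perp}$; and $C$ is $\FF_q$-equivalent to a linear code exactly when, after replacing each $\textnormal{ev}_j$ by $\sigma_j \circ \textnormal{ev}_j$ for suitable $\FF_q$-linear automorphisms $\sigma_j$ of $\FF_{q^h}$ — which is an allowed equivalence by Definition \ref{DefEq} — all the $\textnormal{ev}_j$ become $\FF_{q^h}$-linear for one common $\FF_{q^h}$-structure on $\FF_q^{kh}$. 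Since $\FF_q$-equivalence preserves the $\pgl(kh,q)$-orbit of the projective $h$-system, I may freely replace $C$ by an equivalent code, and I assume $C$ has no zero coordinate.

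For ``$C$ linear $\Rightarrow$ projective $h$-system contained in a Desarguesian spread'': assume $C$ is linear over $\FF_{q^h}$. Transporting multiplication by a generator $\omega$ of $\FF_{q^h}$ through a message-to-codeword isomorphism gives $\mathsf{M} \in \textnormal{GL}(kh,q)$ whose minimal polynomial is the (irreducible, degree $h$) minimal polynomial of $\omega$. Since $\{\mathbf c \in C : c_j = 0\}$ is an $\FF_{q^h}$-subspace, $\ker \textnormal{ev}_j$ is $\mathsf{M}$-invariant, so $\pi_j = (\ker \textnormal{ev}_j)^{\perp}$ is $\mathsf{M}^{t}$-invariant, and it is $h$-dimensional because $C$ has no zero coordinate. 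Hence every $\pi_j$ lies in the Desarguesian spread cut out by $\mathsf{M}^{t}$.

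For the converse: suppose all $\pi_j$ lie in a Desarguesian spread $\mathcal{S}$, and take $N$ as above, so each $\pi_j$ is $N$-invariant. Equip $\FF_q^{kh}$ with the $\FF_{q^h}$-structure coming from $N^{t}$ (identify $\FF_q[N^{t}]$ with $\FF_{q^h}$ by sending $N^{t}$ to a fixed root $\omega'$ of its minimal polynomial). Then $\ker \textnormal{ev}_j = \pi_j^{\perp}$ is $N^{t}$-invariant of $\FF_q$-codimension $h$, i.e.\ an $\FF_{q^h}$-subspace of $\FF_{q^h}$-codimension $1$; thus $\textnormal{ev}_j$ factors through the $1$-dimensional $\FF_{q^h}$-space $\FF_q^{kh}/\ker \textnormal{ev}_j$, inducing an $\FF_q$-isomorphism onto $\FF_{q^h}$ that conjugates the action of $N^{t}$ (i.e.\ of $\omega'$) to some $\FF_q$-linear map $\rho_j$ of $\FF_{q^h}$. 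As $\rho_j$ is non-scalar and killed by the irreducible degree-$h$ minimal polynomial of $\omega'$, this polynomial is both its minimal and its characteristic polynomial, so by rational canonical form $\rho_j$ is $\FF_q$-conjugate to multiplication by $\omega'$. Post-composing $\textnormal{ev}_j$ with the conjugating automorphism $\sigma_j$ of $\FF_{q^h}$ turns it into an $\FF_{q^h}$-linear map for the chosen structure; doing this in every coordinate exhibits an $\FF_q$-equivalent code whose coordinate maps are all $\FF_{q^h}$-linear, hence a linear code.

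The step requiring the most care is the systematic appearance of transposes: a Desarguesian spread is detected by invariance of its members under a ``field matrix'' $N$, but what the code gives directly is $\ker \textnormal{ev}_j = \pi_j^{\perp}$, which is $N^{t}$-invariant rather than $N$-invariant. This is harmless only because $N$ and $N^{t}$ share minimal and characteristic polynomials and so both define Desarguesian spreads; one should also verify that each induced map $\rho_j$ on $\FF_{q^h}$ carries the full degree-$h$ minimal polynomial (so that it is conjugate to a single scalar multiplication), and that the corrections $\sigma_j$ may be chosen independently in each coordinate, as permitted by Definition \ref{DefEq}.
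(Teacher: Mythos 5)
Your proof is correct, but it takes a genuinely different route from the paper's. The paper works entirely with the explicit algebraic model of the Desarguesian spread set up just before the lemma: for the forward direction it takes an $\FF_{q^h}$-basis $g_1,\dots,g_k$ of the linear code, writes down the generator matrix over $\FF_q$ whose $j$-th column block is $\bigl(M(g_{1j})^t \; \cdots \; M(g_{kj})^t\bigr)^t$, and observes that these column spaces are by construction elements of the displayed Desarguesian spread; the converse is obtained by running this computation backwards. You instead use the coordinate-free characterisation of a Desarguesian spread as the set of $h$-dimensional invariant subspaces of a matrix $N$ with irreducible degree-$h$ minimal polynomial, identify $\ker \textnormal{ev}_j$ with $\pi_j^{\perp}$, pass invariance through $\perp$ (whence the transposes), and use the rational canonical form to manufacture the per-coordinate automorphisms $\sigma_j$. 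What each buys: the paper's argument is fully explicit (it actually exhibits the linear code and its generator matrix) and needs nothing beyond the spread description it has just given; yours isolates the conceptual content — a Desarguesian spread is a choice of $\FF_{q^h}$-module structure, and the lemma says exactly that all coordinate maps can be made linear for one common such structure — and makes visible precisely where the freedom of Definition \ref{DefEq} (independent automorphisms in each coordinate) is used. The one ingredient you assert rather than derive is the $N$-invariance characterisation of Desarguesian spreads; it does follow from the paper's explicit description (the displayed spread is the set of invariant $h$-spaces of the block-diagonal matrix $\textnormal{diag}(M(\omega),\dots,M(\omega))$, and a spread cannot properly contain a spread) together with \cite{vandevoorde16}, so this is a legitimate, standard fact, but a referee might ask you to spell it out. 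All the delicate points you flag — surjectivity of $\textnormal{ev}_j$ from the $(h-1)$-dimensionality of $\pi_j$, the transpose bookkeeping, and that the minimal polynomial of each induced map $\rho_j$ is the full irreducible degree-$h$ polynomial so that $\rho_j$ is conjugate to a single scalar multiplication — check out.
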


\begin{proof}
First suppose that $C'$ is an $\FF_q$-linear code over $\FF_{q^h}$, equivalent to the linear code $C$.
It suffices to prove that we can construct a projective $h$-system from $C$ contained in a Desarguesian spread.
Take an $\FF_{q^h}$-basis $g_1, \dots, g_k$ of $C$.
Then
\[
 G' = \begin{pmatrix}
  - & g_1 & - \\ - & \omega g_1 & - \\
  & \vdots & \\ 
  - & \omega^{h-1} g_1 & - \\
  & \vdots & \\
  - & g_k & - \\
  & \vdots \\
  - & \omega^{h-1} g_k & - 
 \end{pmatrix}
\]
is a generator matrix for $C$ over $\FF_q$.
Write 
\(
g_i = \begin{pmatrix} g_{i1} & \dots & g_{in} \end{pmatrix}
\) for $i \in [1,k]$, and define
\(
 G_j = \begin{pmatrix}
  M(g_{1j}) \\ \vdots \\ M(g_{kj}).
 \end{pmatrix}
\)
Write 
\(
 \boldsymbol \omega = \begin{pmatrix}
  1 & \omega & \dots & \omega^{h-1}
 \end{pmatrix}^t
\).
Then
\(
 G = \begin{pmatrix} G_1 \boldsymbol \omega & \dots & G_n \boldsymbol \omega \end{pmatrix}
\)
is a generator matrix of $C$ over $\FF_q$, and $\col (G_1), \dots, \col (G_n)$ is contained in a Desarguesian spread.

Vice versa, consider a projective $h$-system of $(h-1)$-spaces $\set{\pi_1, \dots, \pi_n}$ contained in a Desarguesian spread.
Then up to $\pgl$-equivalence, there exist a matrix $G = (g_{ij})$ such that 
\(
\pi_j = \col \begin{pmatrix}
 M(g_{1j}) \\ \vdots \\ M(g_{kj})
\end{pmatrix}
\).
Reversing the above part of the proof, we see that the codes associated to this projective $h$-system are the codes which are $\FF_q$-equivalent to the row space of $G$ over $\FF_{q^h}$.
\end{proof}

Lastly, an important tool in this paper will be projections, both of projective $h$-systems and of codes.
We will give the definitions here, and prove their equivalence.

\begin{df}
 The \emph{projection} of an additive $(n,M,d)_q$ code $C$ from position $i$ is the code
 \[
  \sett{(x_1, \dots, x_{i-1}, x_{i+1}, \dots, x_n)}{(x_1, \dots, x_{i-1}, 0, x_{i+1}, \dots, x_n) \in C}.
 \]
\end{df}

\begin{rmk}
 The above process is sometimes also referred to as \emph{shortening}.
\end{rmk}

Let $\pi$ be an $(m-1)$-space in $\pg(k-1,q)$ and let $U$ be the corresponding $m$-dimensional subspace of $V = \FF_q^k$.
The cosets of $U$ in $\FF_q^k$ form a $(k-m)$-dimensional vector space over $\FF_q$, which we denote by $V / U$.
For every subspace $W$ of $V$, define its \emph{projection} from $U$ as the subspace $W+U$ in $V / U$.
In the projective space $\pg(k-1,q) = \pg(V)$, this yields a map from its subspaces to the subspaces of $\pg(k-m-1,q) = \pg(V/U)$.
We call this map the projection from $\pi$.

A more geometric way to describe this projection is as follows.
Take a $(k-m-1)$-space $\Sigma$ in $\pg(k-1,q)$, disjoint to $\pi$.
Map a subspace $\rho$ of $\pg(k-1,q)$ to the subspace $\vspan{\pi,\rho} \cap \Sigma$.
For any choice of $\Sigma$, this map is $\pgl$-equivalent to the projection from $\pi$ as defined above.

\begin{df}
 Let $\ma = \set{\pi_1, \dots, \pi_n}$ be a projective $h$-system in $\pg(k-1,q)$.
 The \emph{projection} of $\ma$ from $\pi_i$ equals the projective $h$-system consisting of the images of the elements of $\ma \setminus \set {\pi_i}$ under the projection from $\pi_i$.
\end{df}

\begin{lm}
 \label{LmProjDifferentDef}
Let $C$ be an $\FF_q$-linear $(n,q^k,d)_{q^h}$ over $\FF_{q^h}$, and let $\ma = \set{\pi_1, \dots, \pi_n}$ be an associated projective $h$-system (where $\pi_j$ corresponds to position $j$ in the codewords of $C$).
Then the projective $h$-systems associated to the projection of $C$ from position $i$ equal (up to $\pgl$-equivalence) the projection of $\ma$ from $\pi_i$.
\end{lm}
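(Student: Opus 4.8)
The plan is to exhibit, for a well-chosen generator matrix of the projection $C_i$ of $C$ from position $i$, an associated projective $h$-system that coincides with the projection $\mathrm{proj}_{\pi_i}(\ma)$. Write $U_i \leq \FF_q^k$ for the $m$-dimensional subspace corresponding to $\pi_i$, so $1 \leq m \leq h$; the degenerate case $\pi_i = \varnothing$ is immediate, since then projecting $C$ from position $i$ just deletes a zero column and $\mathrm{proj}_{\pi_i}$ is the identity. Since the projective $h$-systems associated to $C$ form a single $\pgl(k,q)$-orbit and $C_i$ depends only on $C$, I may replace $\ma$ by any $\pgl(k,q)$-translate; this translates $\mathrm{proj}_{\pi_i}(\ma)$ by an element of $\pgl(k-m,q)$ and leaves the collection of $h$-systems associated to $C_i$ unchanged, so it is harmless. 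Hence I may assume $U_i = \vspan{e_{k-m+1}, \dots, e_k}$, so that the top $k-m$ rows of the $i$\textsuperscript{th} block $G_i \in \FF_q^{k \times h}$ of the generator matrix $G$ automatically vanish, while its bottom $m \times h$ block has rank $m$ because $\col(G_i) = U_i$ has dimension $m$.

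First I would pin down $C_i$. A codeword $\mathbf a G$, $\mathbf a \in \FF_q^k$, has entry $0$ in position $i$ if and only if $\mathbf a G_i = \zero$; splitting $\mathbf a = (\mathbf a', \mathbf a'')$ along $\FF_q^k = \FF_q^{k-m} \oplus \FF_q^m$ and using that the top $k-m$ rows of $G_i$ vanish while its bottom block has full row rank, this holds precisely when $\mathbf a'' = \zero$. For such $\mathbf a$, the entries of $\mathbf a G$ depend only on the matrix $G^{(1)}$ formed by the top $k-m$ rows of $G$, and the $i$\textsuperscript{th} column of $G^{(1)}$ is zero. Consequently the matrix $\tilde G$ obtained from $G^{(1)}$ by deleting column $i$ is a generator matrix of $C_i$ over $\FF_q$; its rows stay $\FF_q$-independent because those of $G$ are.

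Then I would compare the two $h$-systems. For $j \neq i$, the $j$\textsuperscript{th} column of $\tilde G$ is $G_j^{(1)} \boldsymbol \alpha$, where $G_j^{(1)} \in \FF_q^{(k-m) \times h}$ is the top $k-m$ rows of $G_j$; so the element of the projective $h$-system associated to $\tilde G$ in that position is $\col\bigl(G_j^{(1)}\bigr)$ inside $\pg(k-m-1,q)$. On the other hand, identifying $\FF_q^k / U_i$ with $\FF_q^{k-m}$ via the first $k-m$ coordinates, the quotient map carries $\col(G_j) = U_j$ to exactly $\col\bigl(G_j^{(1)}\bigr)$, that is, $\mathrm{proj}_{\pi_i}(\pi_j) = \col\bigl(G_j^{(1)}\bigr)$. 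Therefore the projective $h$-system associated to $\tilde G$ equals $\mathrm{proj}_{\pi_i}(\ma)$, and since every projective $h$-system associated to $C_i$ is a $\pgl(k-m,q)$-translate of this one, the lemma follows.

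I expect this to be routine bookkeeping rather than to hide a genuine obstacle; the one point needing care is matching the ``up to $\pgl$-equivalence'' clauses on the two sides --- concretely, checking that an element of $\pgl(k,q)$ stabilising $\pi_i$ induces an element of $\pgl(k-m,q)$ on $\FF_q^k / U_i$, that this correspondence of induced maps is surjective, and hence that passing to the quotient is compatible with forming $\pgl$-orbits. Keeping track of the true dimension $m$ of $\pi_i$, which need not equal $h-1$, throughout the argument is the other place where a slip is easy.
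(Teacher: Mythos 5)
Your proof is correct and takes essentially the same route as the paper: both identify the codewords of $C$ vanishing in position $i$ with the quotient of $\FF_q^k$ by $\col(G_i)$ and read off the projected $h$-system from the resulting generator matrix, the only difference being that you fix coordinates adapted to $U_i$ (top rows / deleted column) where the paper phrases the same computation invariantly via dual functionals on $\FF_q^k/\col(G_i)$. The $\pgl$-compatibility point you flag at the end is indeed routine and is implicitly handled the same way in the paper.
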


\begin{proof}
We will use vector space notation instead of projective space notation for the projective $h$-systems.
Let $G$ be a generator matrix for $C$.
Fix an $\FF_q$-basis $\alpha_1, \dots, \alpha_h$ of $\FF_{q^h}$.
Write 
\(\boldsymbol{\alpha} = \begin{pmatrix} \alpha_1 & \dots & \alpha_h \end{pmatrix}^t \).
Suppose that
\[
 G = \begin{pmatrix}
  G_1 \boldsymbol \alpha & \dots & G_n \boldsymbol \alpha
 \end{pmatrix}.
\]
Denote the columns of $G_j$ by $G_j^1, \dots, G_j^h$, and denote the dual space of $\FF_q^k$ by $(\FF_q^k)^*$.
Then
\begin{align*}
 C = \sett{ \mathbf a G}{ \mathbf a \in \FF_q^k}
 = \sett{\Big( \begin{pmatrix} f(G_j^1) & \dots & f(G_j^h) \end{pmatrix} \boldsymbol \alpha \Big)_{j \in [1,n] }}{ f \in (\FF_q^k)^*}.
\end{align*}
Then the projection of $C$ from position $i$ equals
\begin{align*}
 C' & = \sett{\Big(\begin{pmatrix} f(G_j^1) & \dots & f(G_j^h) \end{pmatrix} \boldsymbol \alpha \Big)_{j \in [1, n] \setminus \set i}}{ f \in (\FF_q^k)^*, \, \col(G_i) \in \ker f} \\
 & = \sett{\Big(\begin{pmatrix} f \left(\vspan{G_j^1,\col(G_i)}\right)  & \dots & f \left(\vspan{G_j^h,\col(G_i)}\right)\end{pmatrix} \boldsymbol \alpha \Big)_{j \in [1,n] \setminus \set i}}{ f \in (\FF_q^k / \col(G_i))^*}.
\end{align*}
So the projective $h$-system associated to $C'$ in vector space notation equals 
\[ \langle \col(G_2),\col(G_1) \rangle, \dots,  \langle \col(G_n),\col(G_1) \rangle \]
in the vector space $\FF_q^k / \col(G_1)$.
In projective terminology, this equals the projection of $\ma$ from $\pi_i$.
\end{proof}

These ideas can be generalised.
The projection of $C$ from a subset of positions can be obtained by consecutively projecting from each of the positions.
Likewise, the projection of a projective $h$-system can be obtained by consecutively projecting from the elements of this subset.
The above lemma generalises in an obvious way to these more general notions of projection.

\section{Pseudo-arcs and translation generalised quadrangles}
 \label{SecGen}

\begin{df}
 A \emph{generalised quadrangle} of order $(s,t)$ is a point-line incidence geometry such that
 \begin{enumerate}
  \item every line is incident with exactly $s+1$ points ($s \geq 1$),
  \item every point is incident with exactly $t+1$ lines ($t \geq 1$),
  \item given a point $P$ and a line $l$ not incident with $P$, $P$ is collinear with a unique point of $l$.
 \end{enumerate}
\end{df}

An \emph{oval} is an arc of size $q+1$ in $\pg(2,q)$.
Tits described a way to construct a generalised quadrangle $T_2(\mathcal O)$ of order $(q,q)$, from an oval.
A \emph{pseudo-oval} is a pseudo-arc of $(h-1)$-spaces of size $q^h+1$ in $\pg(3h-1,q)$.
As a generalisation of Tits' construction, a generalised quadrangle of order $(q^h,q^h)$ can be constructed from a pseudo-oval.
Generalised quadrangles arising from this construction are characterised by certain properties of their automorphism group, and are called \emph{translation generalised quadrangles}.
We refer to the monograph \cite{thasthasvanmaldeghem} for more information.

Suppose that a pseudo-oval is contained in a Desarguesian spread.
We can interpret the elements of the spread as points in $\pg(2,q^h)$.
Hence, the pseudo-oval corresponds to an oval in $\pg(2,q^h)$.
The generalised quadrangle constructed from the pseudo-oval in $\pg(3h-1,q)$ will be isomorphic to the generalised quadrangle constructed from the oval in $\pg(2,q^h)$.
Therefore, people have been looking for pseudo-ovals not contained in Desarguesian spreads, but none have been found so far.
One could wonder whether there even exist pseudo-ovals not contained in Desarguesian spreads.
There is some evidence to the contrary.

Let $\ma$ be a pseudo-arc of $(h-1)$-spaces in $\pg(kh-1,q)$, of size $n$.
One easily verifies that the projection of $\ma$ from any of its elements is a pseudo-arc of $(h-1)$-spaces of size $n-1$ in $\pg((k-1)h-1,q)$.
The projection of a Desarguesian spread from any of its elements is again a Desarguesian spread.
Hence, if a pseudo-arc is contained in a Desarguesian spread, then so are all its projections.
Conversely, can we infer that a pseudo-arc is contained in a Desarguesian spread if some of its projections are?

The following result proved by Penttila and Van de Voorde \cite{penttilavandevoorde} strengthens a result by Casse, Thas, and Wild \cite{cassethaswild}.
A \emph{conic} in a projective plane $\pg(2,q)$ is a set of points satisfying an irreducible homogeneous quadratic equation.
All conics are projectively equivalent, and Segre \cite{segre55} proved that the largest arcs in $\pg(2,q)$, $q$ odd, are exactly the conics, which have $q+1$ points.

\begin{res}[{\cite{penttilavandevoorde}}]
 Consider a pseudo-arc $\ma$ of $(h-1)$-spaces in $\pg(3h-1,q)$, $q$ odd, whose size is at least the size of the largest arc in $\pg(2,q^h)$ which isn't contained in a conic.
 If the projection of $\ma$ from at least one of its elements is contained in a Desarguesian spread, then $\ma$ itself is contained in a Desarguesian spread.
\end{res}

The case where $q$ is even is much more difficult.

\begin{res}[{\cite{rotteyvandevoorde,thas19}}]
 Consider a pseudo-oval $\ma$ in $\pg(3h-1,q)$, with $q>2$ even, and $h$ prime.
 If all projections of $\ma$ are contained in a Desarguesian spread, then $\ma$ is contained in a Desarguesian spread.
\end{res}

The previous results were motivated by their link to translation generalised quadrangles.
The next result arose in the study of additive MDS codes.

\begin{res}[{\cite{gamboa}}]
Let $\ma$ be a pseudo-arc of lines in $\pg(2k-1,q)$ of size at least $q+k$.
If there is a subset $\mathcal S$ of $\ma$ of size $k+1$ with the property that the projection of $\ma$ from any $(k-2)$-subset of $\mathcal S$ is contained in a Desarguesian spread, then $\ma$ itself is contained in a Desarguesian spread.
\end{res}

Using Lemma \ref{LmProjDifferentDef}, we can interpret these results respectively as follows.

\begin{res}
Let $C$ be an $\FF_q$-linear $(n,q^{3h},n-2)_{q^h}$ MDS code over $\FF_{q^h}$.
Suppose that one of the following properties hold:
\begin{enumerate}
 \item \textnormal{\cite{penttilavandevoorde}} $q$ is odd, $n$ is at least the size of the largest arc in $\pg(2,q^h)$ not contained in a conic, and $C$ has a projection which is equivalent to a linear code;
 \item \textnormal{\cite{rotteyvandevoorde,thas19}} $q>2$ is even, $h$ is prime, $n=q^h+1$, and all projections of $C$ are equivalent to linear codes.
\end{enumerate}
Then $C$ is equivalent to a linear code.
\end{res}

\begin{res}[{\cite{gamboa}}]
 \label{ResGamboaCodes}
Let $C$ be an $\FF_q$-linear $(n,q^{2k},n-k+1)_{q^2}$ MDS code over $\FF_{q^2}$ with $n \geq q+k$.
Suppose that there exist $k+1$ positions such that the projection of $C$ from any $k-2$ of these positions is equivalent to a linear code.
Then $C$ is equivalent to a linear code.
\end{res}

Oberve that Result \ref{ResGamboaCodes} can be recovered from Theorem \ref{ThmMainK=3} substituting $h=2$ in case $n \geq 2q+3$.

\section{Proof of the main theorem}
 \label{SecMain}

The characteristic property of an $(n,q^k,n-k+1)_q$ MDS code $C$ over an alphabet $A$ is that if you fix the entries of a vector in $A^n$ in any $k$ positions, there is a unique way to choose the $n-k$ remaining entries from $A$ which gives you a codeword of $C$.
Thus, an $\FF_q$-linear $(n,q^{kh},n-k+1)_{q^h}$ MDS code $C$ over $\FF_{q^h}$ is of the form $C = \sett{ (\xx,f(\xx)) }{ \xx \in \FF_{q^h}^k }$,
for some function $f: \FF_{q^h}^k \to \FF_{q^h}^{n-k}$.
Take $\xx, \yy \in \FF_{q^h}^k$ and $\alpha, \beta \in \FF_q$.
Since $C$ is $\FF_q$-linear, the codeword
\[
 \alpha (\xx,f(\xx)) + \beta(\yy,f(\yy)) = (\alpha \xx + \beta \yy, \alpha f(\xx) + \beta f(\yy))
\]
is also in $C$.
Hence,
\[
 (\alpha \xx + \beta \yy, \alpha f(\xx) + \beta f(\yy)) = (\alpha \xx + \beta \yy, f(\alpha \xx + \beta \yy)).
\] 
This implies that $f$ is $\FF_q$-linear.
In other words, $f$ is of the form
\[
 f: (x_1,\dots,x_k) \mapsto (f_{k+1,1}(x_1) + \ldots + f_{k+1,k}(x_k),\, \dots \,, f_{n,1}(x_1) + \ldots + f_{n,k}(x_k)),
\]
where each $f_{i,j}:\FF_{q^h} \to \FF_{q^h}$ is a linearised polynomial, i.e.\ it is of the form
\[
 f_{i,j}(X) = \sum_{l=0}^{h-1} a_l X^{q^l},
\]
for some coefficients $a_l \in \FF_{q^h}$.
We will use the notation $f_{i,j}$ in this way throughout this section, with $(i,j) \in [k+1,n] \times [1,k]$.
The fact that $C$ is MDS implies that every $f_{i,j}$ is invertible.

The next result applies to a larger class of $\FF_q$-linear codes than only the MDS codes.

\begin{df}
Let $C$ be an $\FF_q$-linear $(n,q^{hk},d)_{q^h}$ code over $\FF_{q^h}$.
We say that $C$ is in \emph{standard form} if
\(
 C = \sett{ (\xx, f(\xx)) }{ \xx \in \FF_{q^h}^k }
\)
for some $\FF_q$-linear map
\( f: \FF_{q^h}^k \to \FF_{q^h}^{n-k} \) satisfying
\begin{itemize}
 \item $f_{k+1,j}(x_j) = x_j$ for all $j \in [1,k]$,
 \item $f_{i,1}(x_1) = x_1$ for all $i \in [k+1,n]$.
\end{itemize}
\end{df}

\begin{lm}
 \label{LmEqLin}
 Assume that $C=\sett{(\xx,f(\xx))}{\xx \in \FF_{q^h}^k}$ is an $\FF_q$-linear $(n,q^{hk},d)_{q^h}$-code over $\FF_{q^h}$ in standard form.
 $C$ is equivalent to a linear code if and only if there exist an invertible linearised polynomial $g$, such that
 \[
  f_{i,j}(x_j) = g(a_{i,j} g^{-1}(x_j))
 \]
 for all $(i,j) \in [k+2,n] \times [2,k]$, with all $a_{i,j} \in \FF_{q^h}$ constants.
\end{lm}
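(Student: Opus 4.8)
The plan is to translate the condition ``$C$ is equivalent to a linear code'' into a statement purely about the linearised polynomials $f_{i,j}$, and then to exploit the two normalisations built into the standard form.

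First I would note that, since linear codes are closed under coordinate permutations, $C$ is equivalent to a linear code if and only if there exist $\FF_q$-linear automorphisms $\phi_1,\dots,\phi_k$ of $\FF_{q^h}$ (acting on coordinates $1,\dots,k$) and $\psi_{k+1},\dots,\psi_n$ (acting on coordinates $k+1,\dots,n$) such that the resulting code is $\FF_{q^h}$-linear; such automorphisms are exactly the invertible linearised polynomials. Substituting $y_j=\phi_j(x_j)$ and using that the $\psi_i$ are additive, this transformed code equals $\sett{(\yy,(\sum_{j=1}^k (\psi_i\circ f_{i,j}\circ\phi_j^{-1})(y_j))_{i})}{\yy\in\FF_{q^h}^k}$. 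Because the $y_j$ are free variables, this code is $\FF_{q^h}$-linear precisely when each composite $\psi_i\circ f_{i,j}\circ\phi_j^{-1}$ is $\FF_{q^h}$-linear, i.e.\ of the form $x\mapsto c_{i,j}x$ for some $c_{i,j}\in\FF_{q^h}$. So the lemma reduces to deciding when one can find invertible linearised $\phi_j,\psi_i$ and constants $c_{i,j}$ with $\psi_i(f_{i,j}(\phi_j^{-1}(x)))=c_{i,j}x$ for all $(i,j)\in[k+1,n]\times[1,k]$.

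For the forward implication I would feed the standard-form identities $f_{k+1,j}=\mathrm{id}$ and $f_{i,1}=\mathrm{id}$ into these equations. The first gives $\psi_{k+1}(\phi_j^{-1}(x))=c_{k+1,j}x$; invertibility of the left-hand side forces $c_{k+1,j}\neq 0$ and $\phi_j^{-1}=g(c_{k+1,j}\,\cdot\,)$, where I set $g:=\psi_{k+1}^{-1}$. The second, taken at $j=1$, gives $\psi_i(\phi_1^{-1}(x))=c_{i,1}x$, and substituting $\phi_1^{-1}=g(c_{k+1,1}\,\cdot\,)$ yields $\psi_i(x)=c_{i,1}c_{k+1,1}^{-1}g^{-1}(x)$. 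Plugging both expressions back into $\psi_i(f_{i,j}(\phi_j^{-1}(x)))=c_{i,j}x$ and simplifying produces exactly $f_{i,j}(x)=g(a_{i,j}g^{-1}(x))$ with $a_{i,j}=c_{i,j}c_{i,1}^{-1}c_{k+1,1}c_{k+1,j}^{-1}\in\FF_{q^h}$, valid for every $(i,j)$ and in particular on $[k+2,n]\times[2,k]$.

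For the converse, given $g$ and the $a_{i,j}$ as in the statement, I would just take $\phi_j=g^{-1}$ for all $j\in[1,k]$ and $\psi_i=g^{-1}$ for all $i\in[k+1,n]$. Then $\psi_i(f_{i,j}(\phi_j^{-1}(x)))=g^{-1}(f_{i,j}(g(x)))$, which equals $a_{i,j}x$ on $[k+2,n]\times[2,k]$ by hypothesis, while it equals $x$ whenever $i=k+1$ or $j=1$, since there $f_{i,j}$ is the identity. Hence the transformed code is $\sett{(\yy,(\sum_{j}a_{i,j}y_j)_{i})}{\yy\in\FF_{q^h}^k}$ with the convention $a_{k+1,j}=a_{i,1}=1$, which is the row space over $\FF_{q^h}$ of a matrix and hence linear; so $C$ is equivalent to a linear code. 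I do not expect a genuine obstacle: the only point needing some care is the bookkeeping of constants in the forward direction and the observation that an invertible linearised polynomial is precisely an $\FF_q$-linear automorphism of $\FF_{q^h}$, so that $g$ and $g^{-1}$ are admissible coordinate transformations in the sense of Definition \ref{DefEq}.
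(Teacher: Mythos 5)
Your proof is correct and follows essentially the same route as the paper's: both reduce to the case of no coordinate permutation and then use the two standard-form normalizations ($f_{k+1,j}=\mathrm{id}$ and $f_{i,1}=\mathrm{id}$) to force every coordinate automorphism to agree with a single $g$ up to a nonzero scalar, after which $f_{i,j}=g(a_{i,j}g^{-1}(\cdot))$ drops out. The only cosmetic difference is that you solve the functional equations $\psi_i\circ f_{i,j}\circ\phi_j^{-1}=c_{i,j}\,\mathrm{id}$ directly, whereas the paper reaches the same conclusion by normalizing a generator matrix of the equivalent linear code to the form $(I_k \; \one \; \ast)$ and comparing specific codewords.
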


\begin{proof}
Let $D$ be an $\FF_{q^h}$-linear code equivalent to $C$.
We may suppose without loss of generality that we do not need to permute the coordinate positions of $D$, since such a permutation does not affect the linearity of $D$.
The first $k+1$ positions of $C$ are a $[k+1,k,2]_{q^h}$ MDS code.
Therefore, the same holds for the first $k+1$ positions of $D$.
Consider a generator matrix $(M \; z \; N )$ of $D$ over $\FF_{q^h}$, where $M$ is square, and $z$ a column vector.
Then $M$ must have full rank due to the MDS property.
Therefore, $(I_k \; z' \; N')$, with $z'=M^{-1} z$ and $N' = M^{-1} N$ is also a generator matrix for $D$.
Furthermore, $z'$ cannot have any entries equal to zero, for this would violate the MDS property.

Let $\one$ denote the all-one vector, and let $\Delta$ denote the diagonal matrix with the entries of $z'$ on the diagonal.
Consider the $\FF_{q^h}$-linear code $E$ with generator matrix
\[
 G = (I_k \; \one \; \Delta^{-1} N') = \Delta^{-1} (\Delta \; z' \; N').
\]
Then $(\Delta \; z' \; N')$ is also a generator matrix for $E$, which shows that $E$ is an equivalent code to $D$, without permutation of the coordinate positions.
Thus, $C$ can be constructed from $E$ by applying an $\FF_q$-linear permutation $g_i$ of $\FF_{q^h}$ to each coordinate $i$.
Hence,
\begin{align}
 \label{EqCx}
 C &= \sett{\left(g_1(x_1), \dots, g_k(x_k), \sum_{j=1}^k g_{k+1}(x_j), ..., \sum_{j=1}^k g_i (a_{i,j} x_j), \dots \right)}{(x_1,\dots,x_k) \in \FF_{q^h}^k} \\
 &= \sett{\left(y_1, \dots, y_k, y_1 + \dots + y_k, \dots, y_1 + \sum_{j=2}^k f_{i,j}(y_j), \dots \right)}{(y_1, \dots, y_k) \in \FF_{q^h}^k},
\end{align}
with $(a_{i,j})_{(i,j) \in [k+2,n]\times[1,k]} = \Delta^{-1} N'$.

Let $e_j$ denote the vector in $\FF_{q^h}^k$ with a 1 in position $j$, and zeros everywhere else.
Since $C$ contains the codeword $(e_1, f(e_1)) = (e_1, \one)$, this has to equal the codeword
\[
 (g_1(x_1),0,\dots,0,g_{k+1}(x_1), \dots, g_j(a_{i,1}x_1), \dots),
\]
for some $x_1 \in \FF_{q^h}$
Hence, none of the $a_{i,1}$ are zero.
We can replace $E$ by an equivalent linear code by rescaling the last columns of $G$ in such a way that each $a_{i,1} = 1$.

Now choose $j \in [1,k]$.
Consider the codewords $(\beta e_j, f(\beta e_j))$ of $C$ for all $\beta \in \FF_{q^h}^k$.
These correspond to the codewords of $C$ in (\ref{EqCx}) with $(x_1, \dots, x_k) = x_j e_j$.
Looking in positions $j$ and $k+1$, we see that $g_j(x_j) = g_{k+1}(x_j)$ for each $x_j$.
Write $g = g_{k+1}$.

Now choose \( j > k+1 \) and look at the codeword \( (\beta e_1, f(\beta e_1)) \) in positions 1 and $j$.
Then we see that $g_1(x_1) = g_j(x_1)$ for all $x_1$.
Thus $g_i = g$ for all $i$.
In (\ref{EqCx}), we can replace $(x_1, \dots, x_k)$ by $(g^{-1}(x_1), \dots, g^{-1}(x_k))$, since the latter also runs through all elements of $\FF_{q^h}^k$ if we vary $(x_1, \dots, x_k)$.
Therefore,
\[
 C = \sett{\left( x_1, \dots, x_k, x_1 + \dots + x_k, \dots, x_1 + \sum_{j=2}^k g(a_{i,j} g^{-1}(x_j)), \dots \right)}{ (x_1, \dots, x_k) \in \FF_{q^h}^k},
\]
as claimed.

Now, conversely, suppose that $C$ is in standard form and that $f_{i,j}(x_j) = g(a_{i,j} g^{-1}(x_j))$ for all $(i,j) \in [k+2,n] \times [2,k]$.
Let $D$ be the code equivalent to $C$ obtained by applying the linearised polynomial $g^{-1}$ in every position, i.e.\
\begin{align*}
 D & = \sett{ \left( g^{-1}(x_1), \dots, g^{-1}(x_k), \dots, g^{-1}(x_1) + \sum_{j=2}^k a_{i,j} g^{-1}(x_j), \dots \right) }{(x_1,\dots,x_k) \in \FF_{q^h}^k} \\
 & = \sett{ \left( y_1, \dots, y_k, \dots, y_1 + \sum_{j=2}^k a_{i,j} y_j, \dots \right) }{(y_1,\dots,y_k) \in \FF_{q^h}^k}.
\end{align*}
Then $D$ is clearly linear.
\end{proof}

\begin{nt}
From now on, let $f$ and $g$ denote invertible $\FF_{q}$-linearised polynomials over $\FF_{q^h}$.
Denote
\begin{align*}
 f(X) = \sum_{i=0}^{h-1} f_i X^{q^i}, &&
 f^{-1}(X) = \sum_{i=0}^{h-1} \overline f_i X^{q^i}, &&
 g(X) = \sum_{i=0}^{h-1} g_i X^{q^i}, &&
 g^{-1}(X) = \sum_{i=0}^{h-1} \overline g_i X^{q^i}.
\end{align*}
\end{nt}

\begin{rmk}
 \label{RmkLinCoeff}
 Consider the function \( f(X) \equiv g( a g^{-1}(X)) \), for some non-zero constant $a$.
 Consider an $\FF_q$-linear field automorphism $\varphi: X \mapsto X^{q^e}$.
 Then
 \( f(X) \equiv (g \circ \varphi) ( a^{\varphi^{-1}} (g \circ \varphi)^{-1}(X)) \).
 So we can suppose w.l.o.g.\ that $g_0 \neq 0$.
\end{rmk}

\subsection{The case $k \geq 4$}

\begin{df}
 Let \( n_q(k) \) denote the maximum length of a $k$-dimensional linear MDS code over $\FF_q$.
\end{df}

For bounds on \(n_q(k)\), we refer the reader to \cite{balllavrauw19}.
The most elementary bounds are \( q+1 \leq n_q(k)  \leq q+k-1 \) if $2 \leq k \leq q-1$, and $n_q(k) = k+1$ if $k \geq q$.
Moreover, an $\FF_q$-matrix generates an MDS code over $\FF_q$ if and only if it generates an MDS code over $\FF_{q^e}$ for any integer $e \geq 1$.
Hence, \( n_q(k) \leq n_{q^e}(k) \).

\begin{lm}
 \label{LmSemiLin}
 Let $f$ be an invertible $\FF_q$-linearised polynomial over $\FF_{q^h}$, and take $a \in \FF_{q^h}$.
 Then $f(a f^{-1}(X))$ is linear if and only if $f$ is $\FF_q(a)$-semi-linear.
\end{lm}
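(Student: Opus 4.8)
The plan is to convert the linearity of $f(af^{-1}(X))$ into a condition on the support of $f$ and then recognise that condition as $\FF_q(a)$-semi-linearity. Set $g(X) := f(af^{-1}(X))$. Since $f$, $f^{-1}$ and multiplication by $a$ are all $\FF_q$-linear, $g$ is again an $\FF_q$-linearised polynomial, so ``$g$ is linear'' means $g(X) = cX$ for some $c \in \FF_{q^h}$. The case $a = 0$ is trivial on both sides (then $g \equiv 0$ and $\FF_q(a) = \FF_q$), so I would assume $a \neq 0$. Substituting $Y = f^{-1}(X)$, the statement $g(X) = cX$ becomes
\[
 f(aY) = c\,f(Y) \quad \text{for all } Y \in \FF_{q^h}.
\]

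First I would expand both sides using $(aY)^{q^i} = a^{q^i}Y^{q^i}$, obtaining $f(aY) = \sum_{i=0}^{h-1} f_i a^{q^i} Y^{q^i}$ and $cf(Y) = \sum_{i=0}^{h-1} c f_i Y^{q^i}$. Two $\FF_q$-linearised polynomials of $q$-degree below $h$ that induce the same function on $\FF_{q^h}$ must be identical, since otherwise their difference would be a nonzero polynomial of degree at most $q^{h-1} < q^h$ vanishing on all of $\FF_{q^h}$. Hence such a $c$ exists if and only if $f_i a^{q^i} = c f_i$ for every $i$, i.e.\ if and only if $a^{q^i}$ is independent of $i$ as $i$ ranges over the support $S := \set{i \in [0,h-1] \mid f_i \neq 0}$ of $f$.

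Next I would translate the support condition. Let $m := [\FF_q(a):\FF_q]$, so $m \mid h$ and $\FF_q(a) = \FF_{q^m}$. For $i \geq j$ one has $a^{q^i} = a^{q^j}$ iff $a^{q^{i-j}} = a$ (raise both sides to the $q^{h-j}$-th power and use $a^{q^h} = a$), iff $a \in \FF_{q^{i-j}} \cap \FF_{q^h} = \FF_{q^{\gcd(i-j,h)}}$, iff $m \mid i-j$. So $a^{q^i}$ is constant on $S$ precisely when $S$ lies in a single residue class $r$ modulo $m$. Finally I would check that this is exactly $f$ being $\FF_q(a)$-semi-linear, meaning $\FF_q$-linear and $\sigma$-semi-linear over $\FF_{q^m}$ for some power $\sigma$ of the Frobenius: if $S \subseteq r + m\mathbb{Z}$, then $\lambda^{q^i} = \lambda^{q^r}$ on $S$ for every $\lambda \in \FF_{q^m}$, so $f(\lambda x) = \lambda^{q^r} f(x)$; conversely, if $f(\lambda x) = \lambda^{q^e} f(x)$ for all $\lambda \in \FF_{q^m}$ and all $x$, comparing coefficients forces $\lambda^{q^i} = \lambda^{q^e}$ for all $i \in S$ and all $\lambda \in \FF_{q^m}$, and taking $\lambda$ primitive in $\FF_{q^m}$ yields $q^i \equiv q^e \pmod{q^m - 1}$, hence $i \equiv e \pmod m$.

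I do not expect a genuine obstacle: the argument is a coefficient comparison followed by the elementary fact $a^{q^i} = a^{q^j} \iff [\FF_q(a):\FF_q] \mid i-j$. The points that need care are reading ``linear'' as $\FF_{q^h}$-linear, justifying the coefficient matching (which is precisely where the $q$-degree bound $< h$ enters), and fixing once and for all which notion of ``$\FF_q(a)$-semi-linear'' is in force, so that the last equivalence reads cleanly; the passage between the support condition and semi-linearity is the step I would write out most carefully.
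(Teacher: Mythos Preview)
Your proof is correct and follows essentially the same approach as the paper's: reduce linearity to a coefficient comparison and then identify the resulting support condition with $\FF_q(a)$-semi-linearity. The only cosmetic difference is that you substitute $Y=f^{-1}(X)$ and compare coefficients of $f$, while the paper applies $f^{-1}$ to both sides and compares coefficients of $f^{-1}$; both routes land on the same elementary fact $a^{q^i}=a^{q^j}\iff [\FF_q(a):\FF_q]\mid i-j$.
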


\begin{proof}
First suppose that $f$ is $\FF_q(a)$-semi-linear, thus there exists a field automorphism $\sigma$ of $\FF_{q^h}$ such that $f(\alpha X) = \alpha^\sigma f(X)$ for each $\alpha \in \FF_q(a)$.
Note that since $f = g \circ \sigma$ for some $\FF_q(a)$-linearised polynomial $g$, $f^{-1} = \sigma^{-1} \circ g^{-1}$, which implies that \( f^{-1}(\alpha X) = \alpha^{\sigma^{-1}} f^{-1}(X) \) for all $\alpha \in \FF_q(a)$.
Then
\[
 f( a f^{-1}(X)) = f ( f^{-1} (a^\sigma X)) = a^\sigma X,
\]
is linear.
Vice versa, suppose that \( f(a f^{-1}(X)) \equiv b X\) for some $b \in \FF_{q^h}$.
Then
\( a f^{-1}(X) = f^{-1}(b X) \).
This implies that
\[
 \sum_{i=0}^h a \overline f_i X^{q^i}
 = \sum_{i=0}^h b^{q^i} \overline f_i X^{q^i}.
\]
Hence, for every $i$, either $\overline f_i=0$ or $a = b^{q^i}$.
Since $f^{-1} \not \equiv 0$, there exists some $i$ with $a = b^{q^i}$.
Fix the smallest non-negative such $i$.
If $\FF_q(a) = \FF_{q^s}$, then $i < s$ and $\overline f_j \neq 0 \implies j \equiv i \pmod s$.
Thus $f^{-1}(\alpha X) = \alpha^{q^i} f^{-1}(X)$ for each $\alpha \in \FF_{q^s}$, which implies that $f^{-1}$, and hence also $f$, is $\FF_q(a)$-semi-linear.
\end{proof}

\begin{thm}
 \label{ThmMainKLarge}
 Let $C$ be an $\FF_q$-linear $(n,q^{hk},n-k+1)_{q^h}$ MDS code over $\FF_{q^h}$, with $k > 3$.
 Suppose that there exist two subsets $A, B \subset [1,n]$ such that
 \begin{itemize}
  \item $A \cap B = \varnothing$,
  \item $|A| + |B| \leq k-2$,
  \item the projections of $C$ from $A$ and $B$ are equivalent to linear codes.
 \end{itemize}
 If 
 \begin{equation}
  \label{EqMinQ}
  n > |A \cup B| + n_q(k-|A \cup B|), 
 \end{equation}
 then $C$ is equivalent to an $\FF_{q^s}$-linear code for some divisor $s > 1$ of $h$.
 In particular, if 
 \begin{equation}
  \label{EqMinQE}
  n > |A \cup B| + n_{q^e}(k-|A \cup B|), 
 \end{equation}
 with $e$ the largest divisor of $h$ smaller than $h$, then $C$ is equivalent to a linear code.
\end{thm}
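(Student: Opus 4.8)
The plan is to reduce to the case $A\cup B=\varnothing$ by a projection argument, prove the statement there, and then lift back. First, put $C$ in standard form, and let $S=A\cup B$, $m=|S|$, $k'=k-m$. After renaming coordinates we may assume $S\subset[2,k]$ whenever $S\neq[1,k]$ — in fact since $|A|+|B|\le k-2$ we can choose the standard-form normalisation so that coordinate $1$ and coordinate $k+1$ are not touched by $S$. The projection of $C$ from $S$ is an $\FF_q$-linear $(n-m,q^{hk'},n-m-k'+1)_{q^h}$ MDS code, and by hypothesis \eqref{EqMinQ} its length $n-m>n_q(k')$, so it is longer than any linear MDS code over $\FF_q$ of that dimension; in particular it is not itself forced to be linear, but we will show its projections from $A$ and $B$ (which are projections of the projections of $C$) being linear will pin down its structure. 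The main point is that once $S$ is handled, applying Lemma \ref{LmEqLin} to $C$ directly: $C$ is equivalent to a linear code iff $f_{i,j}(x_j)=g(a_{i,j}g^{-1}(x_j))$ for all $(i,j)\in[k+2,n]\times[2,k]$; the relaxed conclusion "$\FF_{q^s}$-linear for some $s>1\mid h$" corresponds, via the same lemma applied over the subfield $\FF_{q^s}$, to each $f_{i,j}$ being $\FF_{q^s}$-semi-linear with a common twist.

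The core step is the following. Project $C$ from $A$; the result is equivalent to a linear code, so by Lemma \ref{LmEqLin} there is an invertible linearised $g_A$ with $f_{i,j}(x_j)=g_A\big(a^A_{i,j}g_A^{-1}(x_j)\big)$ for all $j\notin A\cup\{1\}$ and $i$ outside the relevant range. Do the same with $B$, getting $g_B$. Now take a coordinate $j^\star$ lying outside $A\cup B\cup\{1\}$ — this exists because $|A\cup B|\le k-2$ leaves at least one index in $[2,k]$ free. For that $j^\star$, both relations hold simultaneously: $g_A(a^A g_A^{-1}(x))=f_{i,j^\star}(x)=g_B(a^B g_B^{-1}(x))$ for every $i\in[k+2,n]$. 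Since $n-k$ is large (again by \eqref{EqMinQ}, $n>|S|+n_q(k')\ge k'+1+|S|=k+1$, so $n-k\ge 2$, and one arranges more room as needed), we get many such constants; composing two such relations shows $g_B^{-1}\circ g_A$ conjugates multiplication by $a^A_{i,j^\star}$ to multiplication by $a^B_{i,j^\star}$, and by Lemma \ref{LmSemiLin} this forces $g_B^{-1}\circ g_A$ to be $\FF_q(a^A_{i,j^\star})$-semi-linear — i.e.\ essentially a field automorphism — for each $i$. Comparing across different $i$ and using that the $a$'s generate a large subfield (here is where $n>q+k$ enters: a linear MDS code of length $>q+k-1$ over $\FF_q$ does not exist, so the linearity of the projection actually delivers scalars generating a proper superfield $\FF_{q^s}$), one concludes that $g_A$ and $g_B$ differ by a power of Frobenius, hence define the same notion of $\FF_{q^s}$-linearity, and that every $f_{i,j}$ with $j\notin A\cup B$ has the semi-linear form over $\FF_{q^s}$. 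It remains to handle the indices $j\in A$ (resp.\ $j\in B$): for these, use the $B$-relation (resp.\ the $A$-relation), which does cover them, together with the already-established identification $g_A\sim g_B$ up to Frobenius.

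Assembling, we obtain a single invertible linearised $g$ (namely $g_A$, up to Frobenius) and a subfield $\FF_{q^s}$ with $s>1$, $s\mid h$, such that $f_{i,j}(x_j)=g(a_{i,j}g^{-1}(x_j))$ with $a_{i,j}\in\FF_{q^s}$ for all $(i,j)\in[k+2,n]\times[2,k]$; applying the converse direction of (the $\FF_{q^s}$-analogue of) Lemma \ref{LmEqLin} shows $C$ is $\FF_{q^s}$-equivalent to an $\FF_{q^s}$-linear code. For the final sentence: if moreover $n>|S|+n_{q^e}(k')$ with $e$ the largest proper divisor of $h$, then the projection from $S$ is an MDS code over $\FF_{q^h}$ that is $\FF_{q^s}$-linear and longer than $n_{q^e}(k')\ge n_{q^{s'}}(k')$ for any proper divisor $s'$ of $h$; since $s>1$ forces $s\ge$ some prime $p\mid h$ and hence $\FF_{q^s}\supseteq\FF_{q^p}$ with $q^s\le q^h$, the only way a linear MDS code over $\FF_{q^s}$ of dimension $k'$ can have length exceeding $n_{q^e}(k')$ is if $s=h$, i.e.\ the code is genuinely linear over $\FF_{q^h}$. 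Feeding $s=h$ back through the $f_{i,j}$ description and Lemma \ref{LmEqLin} gives that $C$ itself is equivalent to a linear code.

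I expect the main obstacle to be the bookkeeping in the core step: showing that the constants $a^A_{i,j}$ (and $a^B_{i,j}$) arising from two different projections are compatible enough that the two twisting polynomials $g_A,g_B$ agree up to a power of Frobenius, and simultaneously that they generate a proper superfield $\FF_{q^s}$. This is where the hypothesis $n>q+k$ (equivalently $n-|S|>n_q(k')$) is essential and where Lemma \ref{LmSemiLin} does the heavy lifting; the rest is a matter of carefully tracking which indices $j$ are covered by the $A$-relation versus the $B$-relation and invoking the converse of Lemma \ref{LmEqLin} over the right subfield.
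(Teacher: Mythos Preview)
Your outline follows the paper's strategy --- apply Lemma~\ref{LmEqLin} to the two projections to obtain twisting polynomials $g_A,g_B$, compare them on a column $j^\star\notin A\cup B$, and invoke Lemma~\ref{LmSemiLin} --- but the core step has a genuine gap. From the identity $g_A(a^A_{i,j^\star}\,g_A^{-1}(x))=g_B(a^B_{i,j^\star}\,g_B^{-1}(x))$, Lemma~\ref{LmSemiLin} only tells you that $g:=g_A^{-1}\circ g_B$ is $\FF_q(a^A_{i,j^\star})$-semi-linear. It says nothing about $g_A$ or $g_B$ individually: they need not ``differ by a power of Frobenius'' (an $\FF_{q^s}$-semi-linear $\FF_q$-linearised polynomial is an $\FF_{q^s}$-linear map composed with a Frobenius power, not just a Frobenius power), and the maps $f_{i,j}(x)=g_A(a^A_{i,j}g_A^{-1}(x))$ for $j\notin A$ are \emph{not} thereby $\FF_{q^s}$-linear, since $g_A$ can be an arbitrary invertible $\FF_q$-linearised polynomial. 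Your assembled conclusion that $f_{i,j}(x_j)=g(a_{i,j}g^{-1}(x_j))$ for a single $g$ and \emph{all} $(i,j)$ is therefore too strong --- by Lemma~\ref{LmEqLin} that would already make $C$ equivalent to a linear code --- and the claim $a_{i,j}\in\FF_{q^s}$ is unsupported.

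The missing move, which the paper makes, is to first pass to the equivalent code $D$ obtained by applying $g_A^{-1}$ in \emph{every} coordinate. Then for $j\notin A$ the coordinate map becomes the honest scalar $y\mapsto a^A_{i,j}y$, while for $j\in A$ (hence $j\notin B$) it becomes $y\mapsto g(a^B_{i,j}g^{-1}(y))$ with $g=g_A^{-1}\circ g_B$. The projection of $D$ from $A\cup B$ is now a genuinely $\FF_{q^h}$-linear MDS code with generator entries $a^A_{i,j}$; by \eqref{EqMinQ} not all of these lie in $\FF_q$, so Lemma~\ref{LmSemiLin} together with Remark~\ref{RmkLinCoeff} makes $g$ itself $\FF_{q^s}$-linear for some $s>1$. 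Hence every coordinate map of $D$ is $\FF_{q^s}$-linear, and it is $D$ --- not $C$ with its original $f_{i,j}$'s --- that is the $\FF_{q^s}$-linear code. For the last sentence the paper simply iterates the first part over $\FF_{q^s}$, using that $n_{q^s}(k')\le n_{q^e}(k')$; your one-shot idea can also be made to work, but not via ``a linear MDS code over $\FF_{q^s}$'': rather, rerun the same argument with the bound \eqref{EqMinQE} to force some $a^A_{i,j}\notin\FF_{q^e}$, whence $\FF_q(a^A_{i,j})=\FF_{q^h}$ and $g$ is $\FF_{q^h}$-linear, so $D$ is linear outright.
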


\begin{proof}
Since $C$ is an MDS code, we can permute the coordinates of $C$ ad libidum and then put it into standard form without any more coordinate permutations.
Thus, $C$ is equivalent to a code 
\(C' = \sett{(\xx,f(\xx))}{\xx \in \FF_{q^h}^k}\)
in standard form, such that its projections from $A', B' \subseteq [3,k]$, with $A' \cap B' = \varnothing$, are equivalent to linear codes. 
Lemma \ref{LmEqLin} implies that there exist invertible $\FF_q$-linearised polynomials $g_1$ and $g_2$ and constants $(a_{ij})^{i \in [k+1,n]}_{j \in [2,k] \setminus A'}$ and $(b_{ij})^{i \in [k+1,n]}_{j \in [2,k] \setminus B'}$ such that
\begin{enumerate}
 \item $a_{k+1,j} = 1$ and $b_{k+1,j} = 1$,
 \item $f_{i,j}(x_j) = g_1(a_{i,j} g_1^{-1}(x_j))$ for all $j \in [2,k] \setminus A'$,
 \item $f_{i,j}(x_j) = g_2(b_{i,j} g_2^{-1}(x_j))$ for all $j \in [2,k] \setminus B'$.
\end{enumerate}
Now consider the code $D$ equivalent to $C$ obtained by applying $g_1^{-1}$ in every coordinate.
Writing $y_j = g_1^{-1}(x_j)$, we obtain that \( D = \sett{(\yy,f'(\yy))}{\yy\in\FF_{q^h}^k} \) in standard form satisfying:
\begin{enumerate}
 \item \( f'_{i,j}(y_j) = a_{i,j} y_j \) for all \( j \in [2,k] \setminus A' \),
 \item \( f'_{i,j}(y_j) = g(b_{i,j} g^{-1}(y_j)) \) for all \( j \in [2,k] \setminus B' \),
\end{enumerate}
with \( g = g_1^{-1} \circ g_2 \).

Now consider the projection of $D$ from $A' \cup B'$.
It equals the $\FF_{q^h}$-linear code generated by the matrix
\[
 \left( \begin{array}{c c | c c c}
  1 & \mathbf 0 & 1 & \dots & 1 \\
  \mathbf 0 & I_{k-1-|A\cup B|} & & (a_{j,i})^{j\in[2,k]\setminus (A'\cup B')}_{i\in[k+1,n]}
 \end{array} \right).
\]
Note that it is at least 2-dimensional.
If all of the entries of this matrix were in $\FF_q$, it would generate a $(k-|A\cup B|)$-dimensional linear MDS code over $\FF_q$ of length \( n - |A \cup B| > n_q(k-|A \cup B|) \), a contradiction.
Thus, there exists some $a_{i,j} \notin \FF_q$.
Fix these coordinates $(i,j)$.
Since $j \notin B'$,
\(
 g(b_{i,j} g^{-1}(X)) \equiv a_{i,j} X.
\)
Denote \(\FF_q(a_{i,j}) \) by \(\FF_{q^s}\).
Then by Lemma \ref{LmSemiLin}, $g$ is $\FF_{q^s}$-semi-linear.
By Remark \ref{RmkLinCoeff}, we may suppose that $g_0 \neq 0$, from which it follows that $g$ is $\FF_{q^s}$-linear.
This implies that every function \( g(b_{i,j} g^{-1}(y_j)) \) is also $\FF_{q^s}$-linear.
Therefore, every $f'_{i,j}$ is $\FF_{q^h}$-linear if $j \notin A'$ and $\FF_{q^s}$-linear if $j \in A'$.
This implies that $D$ is an $\FF_{q^s}$-linear code.

The last part of the theorem follows from inductively applying the first part, relying on the fact that $n_{q^s}(k)$ is increasing in $s$.
\end{proof}

To make the statement of the proof a bit less technical, one can replace the right-hand sides of the bounds in (\ref{EqMinQ}) and (\ref{EqMinQE}) respectively by $q+k$ and $q^e+k$, as was done in Theorem \ref{ThmMainKLargeIntro}.

One could wonder how tight the bound on $n$ is in the above theorem.
Let us illustrate this with an example where $k=4$, $q \geq 5$, and \( |A| = |B| = 1\).
Write $n = n_q(4)$.
Note that $n_q(4) \geq 6$ since $q \geq 5$.
Take a linear $[n,4,n-3]$ MDS code over $\FF_q$, and choose a generator matrix $G$ of this code in standard form.
Write
\[
 G = \begin{pmatrix}
 1 & & & & 1 & 1 & \dots & 1 \\
 & 1 & & & 1 & a_{6,2} & \dots & a_{n,2} \\
 & & 1 & & 1 & a_{6,3} & \dots & a_{n,3} \\
 & & & 1 & 1 & a_{6,4} & \dots & a_{n,4}
\end{pmatrix}
\]

Consider the linear code $C$ generated by $G$ over $\FF_{q^h}$.
This code is still MDS.
We want to make a slight modification so that the code still has two postions from which its projection is equivalent to a linear code, but so that the code itself is no longer equivalent to a linear code.

Choose $\alpha, \beta \in \FF_{q^h}\setminus\FF_q$ with \( \FF_q \varsubsetneq \FF_q(\alpha) \cap \FF_q(\beta) \).
Note that if we replace $a_{n,3}$ by $\alpha$, $G$ still generates an MDS code over $\FF_{q^h}$.
We make one more modification.
Choose an invertible $\FF_q$-linearised polynomial $g \in \FF_{q^h}[X]$ that is not semi-linear over \( \FF_q(\alpha) \cap \FF_q(\beta) \).
Now consider the code $C$ consisting of the codewords
\[
 \Big( x_1,x_2,x_3,x_4,x_1+x_2+x_3+x_4, \underbrace{\dots, x_1 + \sum_{j=2}^4 a_{i,j} x_j \dots}_{i=6,\dots,n-1}, x_1 + a_{n,2} x_2 + \alpha x_3 +  g(\beta g^{-1}(x_4)) \Big)
\]
where $(x_1,x_2,x_3,x_4)$ runs over $\FF_{q^h}^4$.

\begin{lm}
\emph{The projections of $C$ from the third and from the fourth coordinate are equivalent to linear codes, but $C$ is not equivalent to a linear code.}
\end{lm}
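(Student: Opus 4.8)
The plan is to verify the three assertions separately, using the structural characterisations established earlier in the section. For the two projections, I would invoke Lemma \ref{LmEqLin}: the code $C$ is already displayed in standard form (the first five coordinates are the identity block followed by the all-one column, and every later column has a leading $x_1$ coefficient equal to $1$), so it suffices to exhibit, for each of the two projections, a single invertible linearised polynomial that conjugates all the relevant $f_{i,j}$ into scalar multiplications. For the projection from the fourth coordinate, the surviving maps $f_{i,j}(x_j)$ with $j \in \{2,3\}$ are all honest $\FF_{q^h}$-scalar multiplications ($a_{i,j} x_j$, including $\alpha x_3$), so the identity polynomial $g = X$ works and the projection is in fact linear. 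For the projection from the third coordinate, the surviving maps with $j \in \{2,4\}$ are scalar multiplications for $j=2$ and for $i \le n-1$, together with the single twisted map $g(\beta g^{-1}(x_4))$ in the last coordinate; here Lemma \ref{LmEqLin}, applied with this same $g$, shows directly that the projection is equivalent to a linear code.

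For the negative statement, I would argue by contradiction using Lemma \ref{LmEqLin} again: if $C$ were equivalent to a linear code, then since $C$ is in standard form there would exist one invertible linearised polynomial $g'$ with $f_{i,j}(x_j) = g'(a_{i,j} g'^{-1}(x_j))$ for all $(i,j) \in [6,n] \times [2,4]$ and suitable constants $a_{i,j}$. Looking at the coordinates $j=2,3$ for $i \le n-1$, every $f_{i,j}$ there is already a scalar multiplication $x_j \mapsto a_{i,j} x_j$, and by the same argument as in the proof of Theorem \ref{ThmMainKLarge} (invoking Lemma \ref{LmSemiLin} on a pair $(i,j)$ where $a_{i,j} \notin \FF_q$, which exists because otherwise the projection of $C$ from $\{3,4\}$ would be a linear MDS code over $\FF_q$ of length $n-2 > n_q(2) \ge \dots$ — actually more simply because $\alpha \notin \FF_q$ forces the relevant coordinate to be twisted) one deduces $g'$ is $\FF_q(\alpha)$-semi-linear; symmetrically, the last coordinate's $j=4$ entry $g(\beta g^{-1}(x_4))$ must equal $g'(a_{n,4} g'^{-1}(x_4))$, and comparing the two twisted descriptions of this same map via Lemma \ref{LmSemiLin} forces $g'$ to be $\FF_q(\beta)$-semi-linear as well. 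Then $g'$ is semi-linear over both $\FF_q(\alpha)$ and $\FF_q(\beta)$, hence over $\FF_q(\alpha) \cap \FF_q(\beta)$; and composing, $g = g_1^{-1} \circ g'$-type bookkeeping shows $g$ itself would have to be $\FF_q(\alpha) \cap \FF_q(\beta)$-semi-linear, contradicting the choice of $g$.

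The delicate point — the main obstacle — is making the comparison in the last coordinate rigorous: from the two expressions $g(\beta g^{-1}(x_4)) = g'(a_{n,4} g'^{-1}(x_4))$ I need to extract that $g' \circ (g)^{-1}$ intertwines $\beta$-multiplication with $a_{n,4}$-multiplication, which by Lemma \ref{LmSemiLin} (applied to $g' \circ g^{-1}$ or its inverse) pins down the semi-linearity type of $g'$ relative to that of $g$. Concretely, I expect to show that $g'$ semi-linear over $\FF_{q^s}$ for some field containing $\FF_q(\beta)$ is incompatible with $g$ being non-semi-linear over $\FF_q(\alpha)\cap\FF_q(\beta)$ once we also know $g'$ is $\FF_q(\alpha)$-semi-linear, and the cleanest route is to reduce everything to the coefficient condition in the proof of Lemma \ref{LmSemiLin} (the set of exponents $i$ with $\overline{(g')}_i \neq 0$ lies in a single residue class modulo $s$ where $\FF_{q^s} = \FF_q(\alpha)$, and likewise modulo $s'$ where $\FF_{q^{s'}} = \FF_q(\beta)$, hence in a single class modulo $\mathrm{lcm}(s,s')$, so modulo $|\FF_q(\alpha)\cap\FF_q(\beta):\FF_q|$ after chasing through). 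I would also double-check the hypothesis $q \ge 5$ is only used to guarantee $n_q(4) \ge 6$ so that at least one genuinely twisted coordinate (the $n$-th) exists beyond the all-one column, and that $\FF_q \subsetneq \FF_q(\alpha)\cap\FF_q(\beta)$ together with the existence of a non-semi-linear $g$ over that intersection field is a non-vacuous choice, which it is as soon as that intersection is a proper subfield of $\FF_{q^h}$ — guaranteed by choosing $\alpha,\beta$ generating the same intermediate field strictly between $\FF_q$ and $\FF_{q^h}$, which requires $h$ to have a divisor $s$ with $1 < s < h$, i.e. $h$ not prime (a constraint the example implicitly assumes).
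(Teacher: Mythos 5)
Your handling of the two projections is correct and matches the paper: all the scalars $a_{i,j}$ with $i \le n-1$ lie in $\FF_q$ and therefore commute with the $\FF_q$-linearised polynomial $g$, so applying $g^{-1}$ in every position (equivalently, invoking Lemma \ref{LmEqLin} with this same $g$) linearises the projection from the third coordinate, while the projection from the fourth is linear outright. Likewise, your derivation that $g'$ must be $\FF_q(\alpha)$-semi-linear is sound.

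The gap is exactly in the step you flag as delicate. From $g(\beta g^{-1}(X)) \equiv g'(a_{n,4}\, g'^{-1}(X))$ you cannot conclude that $g'$ is $\FF_q(\beta)$-semi-linear, and in this setting it provably is not: writing $H = g'^{-1} \circ g$ and substituting $X = g(Y)$ turns the relation into $H(\beta Y) = a_{n,4} H(Y)$, i.e.\ $H(\beta H^{-1}(Z)) = a_{n,4} Z$, so Lemma \ref{LmSemiLin} applies to $H = g'^{-1}\circ g$ (not to $g'$, and not to $g'\circ g^{-1}$ either) and yields that \emph{this composition} is $\FF_q(\beta)$-semi-linear. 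Your proposed repair --- confining the nonzero coefficients of $g'$ to a single residue class modulo $s$ and modulo $s'$ --- therefore chases the wrong object and cannot close the argument; indeed, if $g'$ were $\FF_q(\beta)$-semi-linear then $g = g' \circ H$ would be too, contradicting the very choice of $g$. The paper's route is: $g'$ is $\FF_q(\alpha)$-semi-linear and $g'^{-1}\circ g$ is $\FF_q(\beta)$-semi-linear, hence both are semi-linear over $\FF_q(\alpha)\cap\FF_q(\beta)$, hence so is their composition $g = g' \circ (g'^{-1}\circ g)$ --- the desired contradiction. A secondary inaccuracy: the example does not implicitly require $h$ to be composite; $\FF_q(\alpha)\cap\FF_q(\beta)$ need only properly contain $\FF_q$ and may equal $\FF_{q^h}$ itself, in which case ``$g$ not semi-linear over the intersection'' simply means $g$ is not a monomial, which is available for every $h \ge 2$.
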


\begin{proof}
The projection from the fourth coordinate being linear is obvious.
Now project from the third coordinate.
Apply $g^{-1}$ in every coordinate position, and write $y_j = g^{-1}(x_j)$.
Then the obtained code is
\[
 \Big( y_1,y_2,y_4,y_1+y_2+y_4, \dots, y_1 + \sum_{j=2,4} a_{i,j} y_j \dots, y_1 + a_{n,2} y_2 + \beta y_4 \Big)
\]
which is clearly linear.

Now suppose that $C$ is equivalent to a linear code.
By Lemma \ref{LmEqLin}, there exists an invertible $\FF_q$-linearised polynomial $f$ such that
\begin{enumerate}
 \item \( f(\gamma f^{-1}(X)) \equiv \alpha X \), for some \( \gamma \in \FF_{q^h} \).
 This implies that \( f^{-1} ( \alpha f(X)) \equiv \gamma X \).
 Hence, $f$ is $\FF_q(\alpha)$-semi-linear.
 \item \( f(\delta f^{-1}(X)) \equiv g( \beta g^{-1}(X)) \) for some \( \delta \in \FF_{q^h} \).
 This implies that \( (f^{-1} \circ g) ( \beta (f^{-1} \circ g)(X)) \equiv \delta X \).
 Hence, $f^{-1} \circ g$ is $\FF_q(\beta)$-semi-linear.
\end{enumerate}
This implies that $g \equiv f \circ (f^{-1} \circ g)$ is semi-linear over \( \FF_q(\alpha) \cap \FF_q(\beta) \), which contradicts the way we chose $g$.
\end{proof}

The question remains how we must choose $\alpha$, $\beta$, and $g$ such that $C$ is an MDS code.
It must hold that if we choose 4 coordinate positions of $C$, and we want the entries of the positions to be zero, this is only possible if all $x_j=0$.
Since $C$ is a slight modification of an MDS code, the only thing we need to check is the following.
Choose 4 coordinate positions of $C$, including the last one.
Write down the linear system that says we have zeros in these positions.
\[
 \begin{cases}
 \sum_{j=1}^4 a_{i,j} x_{i,j} = 0 & i=i_1,i_2,i_3 \\
 x_1 + a_{n,2} x_2 + \alpha x_3 + g( \beta g^{-1} (x_4)) = 0
 \end{cases}
\]
We only need to prove that if we eliminate the variables $x_1, x_2, x_3$ from these equations, we are left with an equation that implies that $x_4=0$.
Then $x_1=x_2=x_3=0$ by the MDS property of our original code.
After elimination, the remaining equation is of the form
\[
 g( \beta g^{-1} (x_4)) = (\lambda_1 \alpha + \lambda_2) x_4
\]
for some \( \lambda_1, \lambda_2 \in \FF_q \).
In other words, if no non-zero $x_4$ satisfies $g(\beta g^{-1}(x_4)) \in \vspan{1,\alpha}_{\FF_q}$, then $C$ is an MDS code.

Although it is not evident to find a general construction for $\alpha$, $\beta$, and $g$, for small values of $q$, we found examples by computer.

\subsection{The case $k=3$}

The case $k=3$ is more difficult, since then there is not enough ``overlap'' between the different projections.

\begin{df}
 \label{DfProp}
 Let $f$ and $g$ be invertible $\FF_q$-linearised polynomials over $\FF_{q^h}$.
 We say that $(f,g)$ satisfies property $(Prop_m)$ if there exist triples $(a_j,b_j,c_j) \in (\FF_{q^h}^*)^3$, $1 \leq j \leq m$, such that
 \[
 a_j f( b_j f^{-1} (X)) \equiv g (c_j g^{-1} (X))
\]
 and for every $i \neq j$, it holds that $a_i \neq a_j$, $b_i \neq b_j$, and $c_i \neq c_j$.
\end{df}
We will always suppose that $a_1 = b_1 = c_1 = 1$.

\begin{lm}
 \label{LmProp}
 Suppose that for a given value of $m$, the only pairs of $\FF_q$-linearised polynomials of $\FF_{q^h}$ satisfying property $(Prop_m)$ are monomials.
 Then every $\FF_q$-linear $(n,q^{3h},n-2)_{q^h}$ code $C$, with $n \geq 3 + m$ which has at least three projections that are equivalent to linear codes, is itself equivalent to a linear code.
\end{lm}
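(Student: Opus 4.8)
The plan is to funnel the three linear-projection hypotheses into the single algebraic criterion of Lemma~\ref{LmEqLin}, using property~$(Prop_m)$ as the bridge. Since $C$ is MDS and $k=3$, any three coordinates form an information set, so after a coordinate permutation placing the three distinguished coordinates in positions $1,2,3$ we may normalise $C$ to standard form using only coordinatewise $\FF_q$-linear maps, which keep those positions distinguished. Thus $C=\{(\xx,f(\xx)):\xx\in\FF_{q^h}^3\}$, with coordinate $j$ equal to $x_j$ for $j\in\{1,2,3\}$, coordinate $4$ equal to $x_1+x_2+x_3$, and coordinate $i$ equal to $x_1+f_{i,2}(x_2)+f_{i,3}(x_3)$ for $i\in[5,n]$, the $f_{i,j}$ being invertible $\FF_q$-linearised polynomials. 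The aim is to produce one invertible linearised $g$ with $f_{i,2}(X)\equiv g(a_{i,2}g^{-1}(X))$ and $f_{i,3}(X)\equiv g(a_{i,3}g^{-1}(X))$ for all $i$, since Lemma~\ref{LmEqLin} then gives the conclusion.

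Next I would read off one identity from each projection. Shortening $C$ at position $2$ (keep the codewords with $x_2=0$, delete that coordinate) produces a $2$-dimensional code that is already in standard form, so Lemma~\ref{LmEqLin} yields an invertible linearised $g_2$ and constants $\alpha_i\in\FF_{q^h}^*$, with $\alpha_4=1$, such that $f_{i,3}(X)\equiv g_2(\alpha_i g_2^{-1}(X))$ for all $i\in[4,n]$; symmetrically, shortening at position $3$ gives $g_3$ and $\beta_i$, $\beta_4=1$, with $f_{i,2}(X)\equiv g_3(\beta_i g_3^{-1}(X))$. Shortening at position $1$ is slightly different: after deleting position $1$ one must first apply $f_{i,2}^{-1}$ in coordinate $i$ to reach standard form, after which Lemma~\ref{LmEqLin} gives $g_1$ and $\gamma_i$, $\gamma_4=1$, with $(f_{i,2}^{-1}\circ f_{i,3})(X)\equiv g_1(\gamma_i g_1^{-1}(X))$. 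Substituting the first two identities into the third and setting $h_1:=g_3^{-1}\circ g_1$ and $h_2:=g_3^{-1}\circ g_2$ gives, for every $i\in[4,n]$,
\[
 \beta_i\, h_1(\gamma_i h_1^{-1}(X)) \equiv h_2(\alpha_i h_2^{-1}(X)),
\]
which is exactly the relation of Definition~\ref{DfProp}.

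To invoke the hypothesis I must know the $\alpha_i$, the $\beta_i$ and the $\gamma_i$ are each pairwise distinct, which is where the MDS property re-enters. If $\alpha_i=\alpha_{i'}$ then $f_{i,3}=f_{i',3}$, and fixing $x_2$ and examining coordinates $\{2,i,i'\}$ shows the resulting map $\FF_{q^h}^2\to\FF_{q^h}^2$ is not surjective --- impossible for an MDS code; coordinates $\{3,i,i'\}$ settle the $\beta_i$ analogously. If $\gamma_i=\gamma_{i'}$ then $\psi:=f_{i',3}\circ f_{i,3}^{-1}=f_{i',2}\circ f_{i,2}^{-1}$, so coordinate $i'$ of $C$ equals $x_1+\psi(\text{coordinate }i-x_1)$ and $\{1,i,i'\}$ fails to be an information set --- again impossible. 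Hence $(h_1,h_2)$ satisfies $(Prop_m)$ (there are $n-3\ge m$ triples, and the one coming from $i=4$ is $(1,1,1)$), so by hypothesis $h_2$ is a monomial, $h_2(X)=vX^{q^t}$ say. Then $g_2=g_3\circ h_2$, and feeding this into $f_{i,3}(X)\equiv g_2(\alpha_i g_2^{-1}(X))$ collapses it to $f_{i,3}(X)\equiv g_3(\alpha_i^{q^t}g_3^{-1}(X))$; so with $g=g_3$ both required identities hold, and Lemma~\ref{LmEqLin} shows $C$ is equivalent to a linear code.

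The step I expect to be delicate is the translation in the second paragraph: the shortening at position $1$ sits in standard form differently from the shortenings at positions $2$ and $3$, and one must chase compositions of linearised polynomials through three applications of Lemma~\ref{LmEqLin} without order errors. Among the distinctness claims, only the one for $\gamma_i$ is not immediate, relying on the observation that equality would make coordinate $i'$ a function of coordinates $1$ and $i$.
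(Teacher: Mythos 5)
Your proposal is correct and follows essentially the same route as the paper: three applications of Lemma~\ref{LmEqLin} to the three shortened codes, the MDS property to force the constants $\alpha_i,\beta_i,\gamma_i$ to be pairwise distinct, and then property $(Prop_m)$ applied to the pair obtained by conjugating everything by one of the $g$'s. The only cosmetic difference is that the paper first renormalises $C$ by $g_1^{-1}$ so that the $x_2$-functions become scalar multiplications before reading off the $(Prop_m)$ relation from the projection at position~1, whereas you keep the three projections symmetric and compose at the end; your verification of the distinctness claims (which the paper only asserts) is also correct.
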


\begin{proof}
Suppose the hypothesis from the lemma holds.
As in the previous proof, we may suppose that $C$ is in standard form, and the projections from the first three coordinates are equivalent to linear codes.
Since the projections from the second and third coordinate are equivalent to linear codes, Lemma \ref{LmEqLin} 
implies that $C$ is of the form
\[
 \sett{(x,y,z,\dots, x + g_1(a_j g_1^{-1}(y)) + g_2(b_j g_2^{-1}(z)), \dots )  }{(x,y,z) \in \FF_{q^h}^3},
\]
with $g_1$ and $g_2$ invertible $\FF_q$-linearised polynomials, and $a_j$, $b_j$ constants.
This code is equivalent to
\begin{align*}
 C' =& \set{(g_1^{-1}(x),g_1^{-1}(y),g_1^{-1}(z),\dots, g_1^{-1}[ x + g_1(a_j g_1^{-1}(y)) + g_2(b_j g_2^{-1}(z)) ], \dots )  } \\
 = &\set{(x,y,z,\dots, x + a_j y + g_1^{-1} \circ g_2(b_j (g_1^{-1} \circ g_2)^{-1}(z)), \dots )  }.
\end{align*}
Now write $g = g_1^{-1} \circ g_2$.
The fact that the projection from the first coordinate is equivalent to a linear code implies that there exists an invertible $\FF_q$-linearised polynomial $f$ such that
\[
 a_j^{-1} g (b_j g^{-1} (z) ) = f( c_j f^{-1} (z))
\]
for some constants $c_j$.
The fact that $C$ is an MDS code implies that $a_i \neq a_j$, $b_i \neq b_j$, and $c_i \neq c_j$ if $i\neq j$.
Thus, $(f,g)$ satisfies property $(Prop_m)$.
Therefore, they are monomials, which implies that $g(b_j g^{-1}(X))$ is a linear function for every value $b_j$.
Hence, $C'$ is linear.
\end{proof}

\begin{df}
The \emph{Dickson matrix} of an $\FF_q$-linearised polynomial $F(X) = \sum_i F_i X^{q^i}$ over $\FF_{q^h}$ is the matrix
\[
 M_F = \begin{pmatrix}
 F_0 & F_1 & \dots & F_{h-1} \\
 F_{h-1}^q & F_0^q &  & F_{h-2}^q \\
  & & \ddots \\
 F_1^{q^{h-1}} & F_2^{q^{h-1}} & \dots & F_0^{q^{h-1}}
 \end{pmatrix}
 = \left(F_{j-i}^{q^i} \right)_{i,j \in [0,h-1]},
\]
where we take the indices of $F_{j-i}$ modulo $h$.
\end{df}

For more on Dickson matrices, see e.g.\ \cite{wuliu}.
An important property is that $M_{F \circ G} = M_F M_G$ and hence $M_{F^{-1}} = M_F^{-1}$.

\begin{lm}
 \label{LmAtLeastOneZeroCoeff}
 Suppose that $m > \max\{q^{h-1}, hq-1\}$.
 If $(f,g)$ satisfies property $(Prop_m)$, then \[ |\sett{i}{f_i = 0}| = |\sett{i}{g_i = 0}| \geq 1.\]
\end{lm}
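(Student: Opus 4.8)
The plan is to argue by contradiction: suppose $(f,g)$ satisfies $(Prop_m)$ with $m > \max\{q^{h-1}, hq-1\}$ but $f$ has no zero coefficient, and derive a contradiction by a counting argument on the number of admissible triples $(a_j, b_j, c_j)$. The starting point is the defining relation $a_j f(b_j f^{-1}(X)) \equiv g(c_j g^{-1}(X))$, which I will rewrite at the level of Dickson matrices. Since $f^{-1}$ has Dickson matrix $M_f^{-1}$, and the map $X \mapsto b_j X$ has Dickson matrix $\mathrm{diag}(b_j, b_j^q, \dots, b_j^{q^{h-1}})$, composing gives that $a_j f(b_j f^{-1}(X))$ has Dickson matrix $a_j \cdot D(a_j)\, M_f\, \mathrm{diag}(b_j^{q^i})\, M_f^{-1}$ up to the scalar placement — more precisely the linearised polynomial $a_j f(b_j f^{-1}(X))$ has $\ell$-th coefficient expressible, via the entries of $M_f$ and $M_f^{-1}$, as a fixed $\FF_{q^h}$-linear combination of $a_j b_j^{q^0}, \dots, a_j b_j^{q^{h-1}}$. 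The same holds on the right-hand side with $g$ and the $c_j$. Equating coefficient by coefficient yields, for each $\ell \in [0,h-1]$, a relation of the form $\sum_i A_{\ell i}\, a_j b_j^{q^i} = \sum_i B_{\ell i}\, c_j^{q^i}$ where the $A_{\ell i}$ depend only on $f$ and the $B_{\ell i}$ only on $g$.

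The key observation is that these $h$ linear relations constrain each triple $(a_j, b_j, c_j)$ to lie on a low-dimensional algebraic set, and distinctness of the coordinates across $j$ then bounds $m$. Concretely, I would fix $j=1$ with $(a_1,b_1,c_1) = (1,1,1)$ and subtract to get, for general $j$, relations expressing the $c_j^{q^i}$ in terms of the $a_j b_j^{q^i}$. If $f$ has all coefficients nonzero then $M_f$ is "generic" enough that the map $(b_j, a_j b_j^{q}, \dots) \mapsto$ coefficients is injective in the relevant sense; the upshot should be that for each value of $b_j$ there are only boundedly many — in fact at most $q^{h-1}$, coming from the number of $q$-th power relations, or at most $h$ from a Frobenius-twist ambiguity — compatible values of $a_j$, and similarly the pair $(a_j,b_j)$ determines $c_j$ up to a Frobenius twist, giving at most $h$ choices. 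Since the $b_j$ are distinct there are at most $q^h$ of them (trivially), but the real constraint comes from coupling: I expect the relation to force $a_j^{q-1}$ (or $b_j^{q-1}$, $c_j^{q-1}$) to take one of at most $\max\{q^{h-1}, h\}$-many values when $f$ is not a monomial-like degenerate polynomial — mirroring exactly the $k=2$, $h=2$ computation in the commented-out theorem, where raising to the power $q-1$ collapsed the $a_j$-dependence. The bound $m > q^{h-1}$ handles the "power" ambiguity and $m > hq-1$ handles the "Frobenius twist times $\FF_q$-scalar" ambiguity; whichever mechanism is active, once $m$ exceeds the relevant count, two triples must share a coordinate, contradicting $(Prop_m)$. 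By symmetry (the relation is symmetric in swapping the roles of $f,g$ with $a_j \leftrightarrow a_j^{-1}$, $b_j \leftrightarrow c_j$), the same argument shows $g$ must have a zero coefficient, and a closer bookkeeping of which coefficient indices are forced to vanish gives $|\{i : f_i = 0\}| = |\{i : g_i = 0\}|$ — I would get the equality by noting that the support of $f$ and the support of $g$ are tied together through the $h$ coefficient equations, e.g. the nonzero rows/columns of $M_f$ and $M_g$ must match up after the substitution.

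The main obstacle I anticipate is making the coefficient-comparison bookkeeping precise for general $h$: unlike the $h=2$ case, where $f^{-1}$ has an explicit two-term formula and a single application of $\mathbf{x} \mapsto \mathbf{x}^{q-1}$ finishes things, for general $h$ one must control how the zero/nonzero pattern of $M_f$ interacts with the diagonal conjugation $\mathrm{diag}(b_j^{q^i})$, and extract from the $h$ equations a clean statement of the form "$a_j$ is determined by $b_j$ up to $\le N$ choices" with $N \le \max\{q^{h-1}, hq-1\}/(\text{something})$. I would try to do this by viewing the system as a statement about when the polynomial map $t \mapsto M_f\, \mathrm{diag}(t^{q^i})\, M_f^{-1}$ (as $t = b_j$ ranges) can coincide with the analogous $g$-map after a scalar $a_j$: the number of solutions is governed by the degree of an associated polynomial in $b_j$, which is $O(q^{h-1})$ unless $f$ (equivalently its Dickson matrix) is degenerate, i.e. has a zero entry. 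Getting the degree bound exactly right — and separating the two regimes captured by $q^{h-1}$ versus $hq-1$ — is where the real work lies; everything else is the contradiction-with-$(Prop_m)$ packaging and the symmetry argument for $g$.
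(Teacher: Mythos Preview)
Your setup is on the right track --- writing out the $\ell$-th coefficient of $a_j f(b_j f^{-1}(X))$ and of $g(c_j g^{-1}(X))$ via Dickson matrices is exactly how the paper begins --- but the proposal is missing the two concrete mechanisms that actually produce the bounds $q^{h-1}$ and $hq-1$, and your intended counting (``for each $b_j$ there are boundedly many compatible $a_j$'') is not how the argument goes.

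First, the paper does not work directly with the powers $b_j^{q^i}$ but rewrites $b_j^{q^i} = b_j + B_{i,j}$. Since $F(X)=f(1\cdot f^{-1}(X))=X$, the $b_j$-part contributes only to the coefficient of $X$, and the coefficients of $X^{q^1},\dots,X^{q^{h-1}}$ give a clean $(h-1)\times(h-1)$ matrix equation
\[
a_j\,\hat M_f D_f B_j \;=\; \hat M_g D_g C_j,
\]
where $D_f=\mathrm{diag}(f_1,\dots,f_{h-1})$, $\hat M_f$ is a fixed invertible $(h-1)\times(h-1)$ minor of $M_{f^{-1}}^t$ (invertible precisely because $f_0\neq 0$), and $B_j=(b_j^{q^i}-b_j)_i$, $C_j=(c_j^{q^i}-c_j)_i$. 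You never isolate this reduced system, and without it the bookkeeping you worry about does not close.

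Second, the equality $|\{i:f_i=0\}|=|\{i:g_i=0\}|$ is \emph{not} obtained by matching supports of $M_f$ and $M_g$. One takes any $u$ in the left kernel of $\hat M_f D_f$; then $u\hat M_g D_g C_j=0$ for all $j$. The entries of $C_j$ are polynomials in $c_j$ of degree at most $q^{h-1}$, so this is a polynomial in $c_j$ with more than $q^{h-1}$ roots, hence identically zero, forcing $u\hat M_g D_g=0$. Thus the two matrices have the same left kernel, hence the same rank, and since $\hat M_f,\hat M_g$ are invertible this says $\mathrm{rk}\,D_f=\mathrm{rk}\,D_g$. This is where $m>q^{h-1}$ enters.

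Third, for the ``$\geq 1$'' part the paper does not count compatible pairs. Assuming $D_f,D_g$ invertible, one applies Frobenius $\phi$ to the matrix equation and uses that $B_j^\phi = L B_j$ for a fixed invertible matrix $L$ independent of $j$. Solving both the original and the $\phi$-twisted equation for $C_j$ and equating gives that each $B_j$ (with $b_j\notin\FF_q$) lies in the kernel of a matrix whose determinant is a polynomial in $a_j$ of degree at most $(h-1)q$. There are more than $(h-1)q$ values $a_j$ with $b_j\notin\FF_q$ (since $m-(q-1)>(h-1)q$), so this polynomial vanishes identically; its top coefficient is $\det(D_g^{-1}\hat M_g^{-1}\hat M_f D_f)\neq 0$, a contradiction. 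Your ``raise to the $(q-1)$-th power'' idea is the $h=2$ shadow of this Frobenius trick, but it does not generalise as stated; the replacement is the relation $B_j^\phi=LB_j$.
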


\begin{proof}
By Remark \ref{RmkLinCoeff}, we may suppose w.l.o.g.\ that $f_0 \neq 0$ and $g_0 \neq 0$.
Let $(a_j,b_j,c_j)$ be as in Definition \ref{DfProp}.
Write $B_{i,j} = b_j^{q^i} - b_j$ for $i = 1, \dots, h-1$.
Then
\begin{align*}
 f(b_j f^{-1}(X)) \equiv \sum_{i=0}^{h-1} f_i \left( b_j \sum_{l=0}^{h-1} \overline f_l X^{q^l} \right)^{q^i}
 \equiv \sum_{i=0}^{h-1} b_j^{q^i} f_i \sum_{l=0}^{h-1} \overline f_{l-i}^{q^i} X^{q^l} 
 \equiv b_j F(X) + \sum_{i=1}^{h-1} B_{ij} f_i \sum_{l=0}^{h-1} \overline f_{l-i}^{q^i} X^{q^l} 
\end{align*}
for some $\FF_q$-linearised polynomial $F$, where we take the indices of $\overline f_i$ modulo $h$.
Since $b_1=1$, it holds that $f( b_1 f^{-1}(X)) \equiv X$ and all $B_{i,1}=0$, thus $F(X) = X$.

Likewise define $C_{ij} = c_j^{q^i} - c_j$.
Then for every $j$,
\begin{align}
 \label{EqCoeffFAndG}
 a_j \left( b_j X + \sum_{l=0}^{h-1} \left( \sum_{i=1}^{h-1} B_{i,j} f_i \overline f_{l-i}^{q^i} \right) X^{q^l} \right)
 \equiv c_j X + \sum_{l=0}^{h-1} \left( \sum_{i=1}^{h-1} C_{i,j} g_i \overline g_{l-i}^{q^i} \right) X^{q^l}.
\end{align}
Consider the Dickson matrices $M_{f^{-1}}$ and $M_f$ of $f^{-1}$ and $f$.
Since $M_f M_{f^{-1}} = I$,
\[
 \begin{pmatrix}
 f_0 & \cdots & f_{h-1}
 \end{pmatrix}
 M_{f^{-1}} = \begin{pmatrix}
 1 & 0 & \cdots & 0
 \end{pmatrix}
\]
Let $\hat M_f$ denote the submatrix of $M_{f^{-1}}^t$ obtained by deleting the top row and first column.
By Cramer's rule, $ \displaystyle f_0 = \frac{ \det \hat M_f }{ \det M_{f^{-1}} }$.
Since we assumed $f_0 \neq 0$, this implies that $\det \hat M_f \neq 0$.
Define the diagonal matrix
\[
 D_f = \begin{pmatrix}
 f_1 \\ & \ddots \\ & & f_{h-1}
\end{pmatrix}
\]
Analogously define $\hat M_g$ and $D_g$.
Write $B_j = \begin{pmatrix} B_{1,j} \\ \vdots \\ B_{h-1,j} \end{pmatrix}$ and $C_j = \begin{pmatrix} C_{1,j} \\ \vdots \\ C_{h-1, j} \end{pmatrix}$.
Looking in (\ref{EqCoeffFAndG}) at the coefficients of $X^{q}, \dots, X^{q^{h-1}}$, we see that for any $j$,
\begin{align}
 \label{EqMatrixForm}
 a_j \hat M_f D_f B_j = \hat M_g D_g C_j.
\end{align}
Take a vector $u$ in the left-kernel of $\hat M_f D_f$.
Then for any $j$, $u \hat M_g D_g C_j = 0$.
Since all entries of $C_{j}$ are polynomial in $c_j$ of degree at most $q^{h-1}$, we find a polynomial equation in $c_j$ of degree at most $q^{h-1}$.
Since $(f,g)$ satisfies $(Prop_{q^{h-1}+1})$, this equation has at least $q^{h-1}+1$ roots, and therefore identically equals the zero polynomial.
This implies that $u \hat M_g D_g = 0$.
We can repeat this argument with the roles of $f$ and $g$ reversed.
It follows that $\hat M_f D_f$ and $\hat M_g D_g$ have the same left kernel.
Since $\hat M_f$ and $\hat M_g$ have full rank, this implies that $D_f$ and $D_g$ have the same rank.
Note that the rank defect of $D_f$ and $D_g$ equal respectively $|\sett i {f_i = 0}|$ and $|\sett i {g_i = 0}|$.

What is left to prove, is that $D_f$ and $D_g$ do not have full rank.
So suppose the contrary.
Consider the field automorphism $\phi: x \mapsto x^q$.
For a matrix $A$, we let $A^\phi$ denote the matrix obtained by applying $\phi$ to all its entries.
Note that 
\[
 B_{i,j}^q = b_j^{q+1} - b_j^q = \begin{cases}
  B_{i+1, j} - B_{1,j} & \text{if } i < h-1, \\
  - B_{1,j} & \text{if } i = h-1.
 \end{cases}
\]
Define the matrix
\[
 L = \begin{pmatrix}
 -1 & 1 \\
 \vdots & & \ddots \\
 -1 & & & 1 \\
 -1
 \end{pmatrix}
\]
Then $B_j^\varphi = L B_j$ and it is easy to check that $L$ has full rank.
If we apply $\phi$ to (\ref{EqMatrixForm}), this yields
\[
 a_j^q \hat M_f^\phi D_f^\phi L B_j = \hat M_g^\phi D_g^\phi L C_j
\]
This implies that
\[
 C_j = a_j D_g^{-1} \hat M_g^{-1} \hat M_f D_f B_j = a_j^q L^{-1} D_g^{-\phi} \hat M_g^{-\phi} \hat M_f^{\phi} D_f^{\phi} L B_j.
\]
If $b_j \notin \FF_q$, then this implies that the non-zero vector $B_j$ is in the kernel of
\[
 a_j D_g^{-1} \hat M_g^{-1} \hat M_f D_f - a_j^q L^{-1} D_g^{-\phi} \hat M_g^{-\phi} \hat M_f^{\phi} D_f^{\phi} L.
\]
The determinant of this matrix is a polynomial in $a_j$ of degree at most $(h-1)q$.
Since it has at least $m - (q-1) > (h-1)q$ zeros, it must be identically zero.
Looking at the coefficient of $a_j^{h-1}$, this implies that $\det(D_g^{-1} \hat M_g^{-1} \hat M_f D_f) = 0$.
But by our assumption, this was a product of full-rank matrices, which yields a contradiction.
\end{proof}

\begin{lm}
 \label{LmInverse}
 If $(f,g)$ satisfies property $(Prop_m)$, then so do $(f^{-1},f^{-1} \circ g)$ and $(g^{-1}, g^{-1} \circ f)$.
\end{lm}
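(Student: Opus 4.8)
The plan is to obtain an instance of $(Prop_m)$ for the new pair simply by inverting the defining identity of $(Prop_m)$ for $(f,g)$ as a composition of maps, and then conjugating by $f$ (resp.\ $g$). Throughout, I keep the normalization $a_1 = b_1 = c_1 = 1$ from Definition \ref{DfProp} and check it is preserved.

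First I would treat $(f^{-1},\, f^{-1}\circ g)$. Suppose $(f,g)$ satisfies $(Prop_m)$ with triples $(a_j,b_j,c_j)$, so $a_j f(b_j f^{-1}(X)) \equiv g(c_j g^{-1}(X))$ for $1 \le j \le m$. The map $X \mapsto a_j f(b_j f^{-1}(X))$ is a composition of invertible $\FF_q$-linearized maps and scalar multiplications, with inverse $X \mapsto f\!\left(b_j^{-1} f^{-1}(a_j^{-1} X)\right)$; likewise $X \mapsto g(c_j g^{-1}(X))$ has inverse $X \mapsto g\!\left(c_j^{-1} g^{-1}(X)\right)$. Inverting both sides of the identity therefore gives $f\!\left(b_j^{-1} f^{-1}(a_j^{-1} X)\right) \equiv g\!\left(c_j^{-1} g^{-1}(X)\right)$. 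Now I substitute $X \mapsto f(X)$ and then apply $f^{-1}$ to both sides, obtaining $b_j^{-1} f^{-1}\!\left(a_j^{-1} f(X)\right) \equiv f^{-1}\!\left(g\!\left(c_j^{-1} g^{-1}(f(X))\right)\right)$. Writing $F = f^{-1}$ and $H = f^{-1}\circ g$, so that $F^{-1} = f$ and $H^{-1} = g^{-1}\circ f$, the left-hand side is $b_j^{-1}\, F\!\left(a_j^{-1}\, F^{-1}(X)\right)$ and the right-hand side is $H\!\left(c_j^{-1}\, H^{-1}(X)\right)$. This is exactly the defining identity of $(Prop_m)$ for $(F,H)$ with triples $\left(b_j^{-1},\, a_j^{-1},\, c_j^{-1}\right)$; since $x \mapsto x^{-1}$ is a bijection of $\FF_{q^h}^*$, the three distinctness conditions and the nonzeroness carry over, and the triple for $j=1$ is $(1,1,1)$.

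Next I would deduce the statement for $(g^{-1},\, g^{-1}\circ f)$ from the case just proved, applied to the pair $(g,f)$ instead of $(f,g)$. Multiplying the original identities by $a_j^{-1}$ gives $a_j^{-1}\, g(c_j g^{-1}(X)) \equiv f(b_j f^{-1}(X))$, which is precisely $(Prop_m)$ for $(g,f)$ with triples $\left(a_j^{-1},\, c_j,\, b_j\right)$ (again pairwise distinct in each slot, with $j=1$ giving $(1,1,1)$). Applying the first part to this pair yields $(Prop_m)$ for $\left(g^{-1},\, g^{-1}\circ f\right)$ with triples $\left(c_j^{-1},\, a_j,\, b_j^{-1}\right)$.

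The argument is entirely formal, so the only real obstacle is bookkeeping: correctly tracking how the three scalar slots are permuted and inverted when one passes to the inverse function and conjugates, and verifying that the pairwise-distinctness requirements of Definition \ref{DfProp}, together with the normalization $a_1 = b_1 = c_1 = 1$, survive each of these manipulations. No analytic or combinatorial input is needed beyond the injectivity of inversion on $\FF_{q^h}^*$.
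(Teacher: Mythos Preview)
Your proof is correct and follows essentially the same route as the paper: invert both sides of the defining identity as compositions, then substitute $X=f(Y)$ and apply $f^{-1}$ to read off $(Prop_m)$ for $(f^{-1},f^{-1}\circ g)$ with triples $(b_j^{-1},a_j^{-1},c_j^{-1})$. Your treatment of $(g^{-1},g^{-1}\circ f)$ via the symmetry $(f,g)\leftrightarrow(g,f)$ is a minor but clean variant of the paper's ``completely analogous'' remark, and your explicit check of the distinctness and normalization conditions is a detail the paper leaves implicit.
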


\begin{proof}
 \begin{align*}
  a_j f ( b_j f^{-1} (X)) &\equiv g (c_j g^{-1} (X)) \\
 \iff  f (b_j^{-1} f^{-1} (a_j^{-1} X)) \equiv [ a_j f ( b_j f^{-1} (X))]^{-1} &\equiv 
 [g (c_j g^{-1} (X))]^{-1} \equiv g( c_j^{-1} g^{-1} (X)) \\
\iff b_j^{-1}  f^{-1} (a_j^{-1} f(Y)) &\equiv f^{-1} \circ g (c_j^{-1} g^{-1} \circ f(Y))
 \end{align*}
The last equality follows from the substitution $X = f(Y)$.
Since $(f^{-1} \circ g)^{-1} = g^{-1} \circ f$, it follows that $(f^{-1}, f^{-1} \circ g)$ satisfies property $(Prop_m)$.
The proof for $(g^{-1}, g^{-1} \circ f)$ is completely analogous.
\end{proof}

\begin{lm}
 \label{LmTwoNonZeroCoeff}
 Suppose that $f$ has exactly two non-zero coefficients, and that $f$ is not semi-linear over some subfield $\FF_{q^s} \supset \FF_q$ of $\FF_{q^h}$.
 Then all coefficients of $f^{-1}$ are non-zero.
\end{lm}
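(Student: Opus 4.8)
The plan is to exploit how rigid a linearised polynomial with exactly two nonzero coefficients is. First I would normalise: composing $f$ on the right with a Frobenius automorphism $X \mapsto X^{q^e}$ cyclically shifts the coefficient sequence of $f$, and shifts that of $f^{-1}$ likewise (after applying a fixed field automorphism to each entry), so it changes neither the number of vanishing coefficients of $f$ nor that of $f^{-1}$; and by the proof of Lemma~\ref{LmSemiLin} it does not affect semi-linearity over subfields either. Hence I may assume $f(X) = f_0 X + f_b X^{q^b}$ with $f_0, f_b \in \FF_{q^h}^*$ and $1 \leq b \leq h-1$. Recall (from the proof of Lemma~\ref{LmSemiLin}) that a linearised polynomial is $\FF_{q^s}$-semi-linear precisely when its support, the set of exponents carrying a nonzero coefficient, lies in a single residue class modulo $s$; here the support is $\{0,b\}$, so $f$ is semi-linear over a subfield $\FF_{q^s}$ with $s > 1$, $s \mid h$, if and only if $s \mid b$. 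Thus the hypothesis on $f$ is equivalent to $\gcd(b,h) = 1$, and this is the only way it will be used.

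Next I would pass to the operator picture. Regard $f$ as an $\FF_q$-linear map of $\FF_{q^h}$ and factor $f = M_{f_0}\circ(\mathrm{id}+S)$, where $M_c$ denotes multiplication by $c$ and $S$ is the operator $x \mapsto \lambda x^{q^b}$ with $\lambda = f_b/f_0 \neq 0$. Iterating, $S^k$ is the operator $x \mapsto \lambda^{1+q^b+\cdots+q^{(k-1)b}} x^{q^{kb}}$ (so $S^0 = \mathrm{id}$), and since $\gcd(b,h)=1$ the residues $kb \bmod h$ for $0 \leq k \leq h-1$ are pairwise distinct; in particular $S^h = \mu\,\mathrm{id}$ with $\mu = \prod_{i=0}^{h-1}\lambda^{q^i} = N_{\FF_{q^h}/\FF_q}(\lambda) \in \FF_q^*$. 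Telescoping,
\[
 (\mathrm{id}+S)\Bigl(\sum_{k=0}^{h-1}(-1)^kS^k\Bigr) = \bigl(1-(-1)^h\mu\bigr)\,\mathrm{id}.
\]
Since $X, X^q, \dots, X^{q^{h-1}}$ are $\FF_{q^h}$-linearly independent and the $S^k$ occupy pairwise distinct such degrees, $\sum_k(-1)^kS^k$ is a nonzero operator; as $f$, hence $\mathrm{id}+S$, is invertible, it follows that $1-(-1)^h\mu \neq 0$.

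From the displayed identity, $(\mathrm{id}+S)^{-1} = (1-(-1)^h\mu)^{-1}\sum_{k=0}^{h-1}(-1)^kS^k$, whence
\[
 f^{-1}(X) = (\mathrm{id}+S)^{-1}\bigl(f_0^{-1}X\bigr) = \frac{1}{1-(-1)^h\mu}\sum_{k=0}^{h-1}(-1)^k\lambda^{1+q^b+\cdots+q^{(k-1)b}}f_0^{-q^{kb}}\,X^{q^{kb}}.
\]
Every coefficient here is a product of nonzero elements of $\FF_{q^h}$, and $k \mapsto kb \bmod h$ is a bijection of $\{0,\dots,h-1\}$, so the monomials $X^{q^{kb}}$ are $X, X^q, \dots, X^{q^{h-1}}$ in some order; thus all $h$ coefficients of $f^{-1}$ are nonzero, and undoing the normalisation of the first step gives the lemma. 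One could equivalently argue with Dickson matrices, decomposing $M_f$ as a diagonal matrix plus a scaled $h$-cycle, but the operator formulation seems cleaner.

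I do not expect a genuine obstacle here; the two points that require care are that $M_{f_0}$ and $S$ do not commute — so one may not move $f_0^{-1}$ across $(\mathrm{id}+S)^{-1}$, which is precisely why $f_0^{-q^{kb}}$ rather than $f_0^{-1}$ appears in the coefficients — and that the scalar $1-(-1)^h\mu$ must be \emph{shown} nonzero (via invertibility of $f$) before one inverts, rather than assumed. Both are dispatched in the steps above.
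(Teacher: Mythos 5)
Your proof is correct, but it takes a genuinely different route from the paper's. The paper works with the transposed Dickson matrix $M_f^t$: by Cramer's rule each coefficient of $f^{-1}$ is, up to a nonzero factor, the determinant of the $(h-1)\times(h-1)$ minor of $M_f^t$ obtained by deleting the top row and one column, and the paper shows each such determinant is nonzero by a combinatorial argument — in the Leibniz expansion exactly one permutation $\sigma$ contributes a nonzero term, found by ``walking'' around $\mathbb{Z}/h\mathbb{Z}$ in steps of $j$ (coprimality of $j$ with $h$ enters exactly where your $\gcd(b,h)=1$ does). You instead factor $f = M_{f_0}\circ(\mathrm{id}+S)$ and invert $\mathrm{id}+S$ by a finite geometric series, using $S^h = N_{\FF_{q^h}/\FF_q}(\lambda)\,\mathrm{id}$; this yields a closed-form expression for $f^{-1}$ whose $h$ coefficients are visibly nonzero products. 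Your version is more explicit (it produces the inverse rather than merely certifying its coefficients are nonzero, and the paper's nonvanishing minors are just the determinantal shadow of your formula), and you correctly isolate the two delicate points: the non-commutativity of $M_{f_0}$ with $S$, and the need to derive $1-(-1)^h\mu\neq 0$ from invertibility of $f$ rather than assume it. Your normalisation step (right-composition with Frobenius preserves the count of vanishing coefficients of both $f$ and $f^{-1}$, and preserves semi-linearity over subfields) matches the paper's appeal to Remark \ref{RmkLinCoeff}. No gaps.
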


\begin{proof}
We can again compose $f$ with a monomial to make $f_0$ one of the non-zero coefficients.
Then the other non-zero coefficient is $f_j$ where $j$ is coprime with $h$.

Let $M$ be the transposition of the Dickson matrix of $f$, that is
\[
 M = M_f^t = \begin{pmatrix}
  f_0  & & & & f_j^{q^{h-j}} \\
  & f_0^q  & & & & \ddots \\
  & & & & & & f_j^{q^{h-1}}         \\ 
  & & & \ddots         \\
  f_j    \\
  & f_j^q \\
  & & \ddots \\
 & & & f_j^{q^{h-j-1}} & & & f_0^{q^{h-1}}
 \end{pmatrix}
\]
Then $M M_{f^{-1}}^t = I_h$.
Looking at the first column, this implies that
\[
 M \begin{pmatrix}
 \overline f_0 \\ \overline f_1 \\ \vdots \\ \overline f_{h-1}
 \end{pmatrix}
 = \begin{pmatrix}
 1 \\ 0 \\ \vdots \\ 0
 \end{pmatrix}.
\]
Thus, by Cramer's rule, we need to prove that if we remove the top row of $M$, then any $(h-1) \times (h-1)$ submatrix of has non-zero determinant.
We do this using the following expression for the determinant of a $n \times n$-matrix $A=(a_{il})$.
\[ \det(A) = \sum_{\sigma \in S_n} \text{sgn}(\sigma)  \prod_{i=1}^n a_{\sigma(i) \, i}, \]
where $S_n$ denotes the symmetric group on $n$ elements and $\text{sgn}(\sigma)$ denotes the sign of the permutation $\sigma$.
We will prove for every $(h-1) \times (h-1)$-submatrix described above that when expressing its determinant in this way, the sum has exactly one non-zero term, hence is not zero.

So remove column $k$ and the top row from $M$, and index the rows and columns in the corresponding matrix by $[1,h-1] \times [0,h-1] \setminus \set k$.
We will denote the indices modulo $h$.
We need to prove that there is a unique bijection $\sigma: [0,h-1] \setminus \set k \to [1,h-1]$ such that $M_{\sigma(i),i} \neq 0$ for all $i$.
Equivalently, for every $i$, $\sigma(i) \in \set{i,i+j} \setminus \set 0$.
Suppose that $\sigma(i) = i$.
If $i-j \neq k$, then $\sigma(i-j) = i-j$, since $\sigma(i-j) = i$ would violate the injectivity of $\sigma$.
Likewise if $\sigma(i) = i+j$ and $i+j \neq k$, then $\sigma(i+j) = i+2j$.

Now write $i \prec l$ if $i \equiv a j \pmod h$ and $l \equiv b j \pmod h$ for some $0 \leq a < b < h$.
In other words, $i \prec l$ if given the residue classes of $i j^{-1}$ and $l j^{-1}$ modulo $h$, the former has the smallest representative in $[0,h-1]$.
Note that this ordering is well-defined because $j$ is coprime with $h$.

Consider a column $i$.
\begin{itemize}
 \item If $i \prec k$, we can walk from column 0 to column $i$ with step size $j$ without encountering column $k$.
 Since $\sigma(0)$ must equal $j$, this means that $\sigma(i)$ must equal $i+j$.
 \item If $i \succ k$, we can walk from column $-j$ to column $i$ with step size $-j$.
 Since $\sigma(-j)$ must equal $-j$, this implies that $\sigma(i)$ must equal $i$.
\end{itemize}
Hence, the only possible way to construct $\sigma$ is as follows.
\[
 \sigma(i) = \begin{cases}
  i+j & \text{if } i \prec k, \\
  i & \text{if } i \succ k.
 \end{cases}
\]
One can easily check that this gives indeed the desired bijection.
\end{proof}

\begin{thm}
 Let $C$ be an $\FF_q$-linear $(n,q^{3h},n-2)_{q^h}$ MDS code over $\FF_{q^h}$.
 Suppose that one of the following holds.
 \begin{enumerate}
  \item $h=2$, and $n \geq 2q+3$.
  \item $h=3$, and $n \geq q^2 + 3 + \delta_{2,q}$.
 \end{enumerate}
 If $C$ has at least three coordinates from which the projections are equivalent to linear codes, then $C$ itself is equivalent to a linear code.
\end{thm}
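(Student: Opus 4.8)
The plan is to reduce the statement, via Lemma~\ref{LmProp} with $m=n-3$, to the purely algebraic claim that any pair $(f,g)$ of invertible $\FF_q$-linearised polynomials over $\FF_{q^h}$ satisfying $(Prop_m)$ must consist of two monomials. The hypotheses are arranged precisely so that $m=n-3\geq\max\{q^{h-1},\,hq-1\}$: for $h=2$ this reads $n-3\geq 2q-1$, and for $h=3$ it reads $n-3\geq q^2$ when $q\geq 3$ and $n-3\geq 5=3q-1$ when $q=2$, the latter exception being exactly what the term $\delta_{2,q}$ absorbs. Hence Lemma~\ref{LmAtLeastOneZeroCoeff} applies and gives $|\{i:f_i=0\}|=|\{i:g_i=0\}|\geq 1$.

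When $h=2$ we are done immediately: an $\FF_q$-linearised polynomial over $\FF_{q^2}$ has only the two coefficients $f_0,f_1$, so one of them vanishing makes $f$ --- and likewise $g$ --- a monomial. By Lemma~\ref{LmProp}, $C$ is equivalent to a linear code.

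When $h=3$, assume for contradiction that $f$ is not a monomial; then it has a zero coefficient and therefore exactly two nonzero ones. By Lemma~\ref{LmInverse} the pair $(f^{-1},f^{-1}\circ g)$ also satisfies $(Prop_m)$, so a second application of Lemma~\ref{LmAtLeastOneZeroCoeff} forces $f^{-1}$ to have a zero coefficient as well --- that is, not all coefficients of $f^{-1}$ are nonzero. The contrapositive of Lemma~\ref{LmTwoNonZeroCoeff} then says that $f$ is semi-linear over some subfield $\FF_{q^s}$ with $\FF_q\subsetneq\FF_{q^s}\subseteq\FF_{q^3}$; as $3$ is prime this forces $s=3$, so $f(\alpha X)=\alpha^\sigma f(X)$ for every $\alpha\in\FF_{q^3}$ and some automorphism $\sigma$, whence $f(X)=f(1)\,X^\sigma$ is a monomial, a contradiction. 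Therefore $f$, and by the symmetric argument (applying Lemma~\ref{LmInverse} to get that $(g^{-1},g^{-1}\circ f)$ satisfies $(Prop_m)$) also $g$, is a monomial, and Lemma~\ref{LmProp} yields that $C$ is equivalent to a linear code.

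The step I expect to be the real obstacle is making the length bound sharp in the $h=3$ case: there the inequality $m\geq\max\{q^{h-1},hq-1\}$ meets Lemma~\ref{LmAtLeastOneZeroCoeff} only at its boundary (the lemma as stated asks for strict inequality), so one must revisit its proof and observe that the interpolation polynomials forced to vanish there --- in the variables $c_j$ and $a_j$ --- have no constant term, so $0$ is an additional root and the threshold drops by the one unit that is needed. A smaller point to verify is the collapse ``semi-linear over $\FF_{q^h}$ with $h$ prime $\Rightarrow$ monomial'', which is what actually terminates the chain of deductions when $h=3$.
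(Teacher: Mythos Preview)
Your proof is correct and follows the paper's approach step for step: reduce via Lemma~\ref{LmProp} to showing that $(f,g)$ must be monomials, then combine Lemmas~\ref{LmAtLeastOneZeroCoeff}, \ref{LmInverse}, and~\ref{LmTwoNonZeroCoeff} (the paper handles $h=3$ by noting directly that two zero coefficients among three force a monomial, implicitly using that the semi-linear alternative in Lemma~\ref{LmTwoNonZeroCoeff} collapses for $h$ prime---your explicit route through $\FF_{q^3}$-semi-linearity is equivalent). You are also right to flag the boundary issue for $h=3$: the stated bound $n\geq q^2+3+\delta_{2,q}$ yields only $m\geq\max\{q^{h-1},hq-1\}$ rather than the strict inequality in Lemma~\ref{LmAtLeastOneZeroCoeff}, and your proposed fix (the interpolation polynomials in $c_j$ and in $a_j$ vanish at $0$, supplying the missing root) is correct and is a detail the paper's own proof passes over silently.
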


\begin{proof}
By Lemma \ref{LmProp}, it suffices to prove that if $(f,g)$ satisfies property $(Prop_{n-3})$, then $f$ and $g$ are monomials.
By Lemma \ref{LmAtLeastOneZeroCoeff} and \ref{LmInverse}, we know that $f$ and $f^{-1}$ have at least one coefficient equal to zero.
For $h=2$, this proves that $f$ is a monomial.
Now suppose that $h=3$.
If $f$ would have exactly one coefficient equal to zero, then $f^{-1}$ would have no coefficients equal to zero by Lemma \ref{LmTwoNonZeroCoeff}.
Thus, $f$ has at least two coefficients equal to zero, which implies that $f$ is a monomial.
Likewise for $g$.
\end{proof}

\bigskip

\noindent \textbf{Acknowledgements.}
The authors would like to thank Geertrui Van de Voorde for helpful discussions.

\bibliographystyle{alpha}

\bibliography{ref}

\newcommand{\etalchar}[1]{$^{#1}$}
\begin{thebibliography}{DBGW{\etalchar{+}}10}

\bibitem[Bal20]{ball20}
S.~Ball.
\newblock {\em A course in algebraic error-correcting codes}.
\newblock Compact Textbooks in Mathematics. Birkh\"{a}user/Springer, Cham,
  2020.

\bibitem[BGL22]{gamboa}
S.~Ball, G.~Gamboa, and M.~Lavrauw.
\newblock On additive {MDS} codes over small fields.
\newblock {\em Adv. Math. Commun.}, 2022.
\newblock to appear.

\bibitem[BL11]{baderlunardon}
L.~Bader and G.~Lunardon.
\newblock Desarguesian spreads.
\newblock {\em Ric. Mat.}, 60(1):15--37, 2011.

\bibitem[BL19]{balllavrauw19}
S.~Ball and M.~Lavrauw.
\newblock Arcs in finite projective spaces.
\newblock {\em EMS Surv. Math. Sci.}, 6(1-2):133--172, 2019.

\bibitem[CTW85]{cassethaswild}
L.~R.~A. Casse, J.~A. Thas, and P.~R. Wild.
\newblock {$(q^n+1)$}-sets of {${\rm PG}(3n-1,q)$}, generalized quadrangles and
  {L}aguerre planes.
\newblock {\em Simon Stevin}, 59(1):21--42, 1985.

\bibitem[DBGW{\etalchar{+}}10]{dimakis}
A.~G. Dimakis, P.~Brighten~Godfrey, Y.~Wu, M.~J Wainwright, and K.~Ramchandran.
\newblock Network coding for distributed storage systems.
\newblock {\em IEEE transactions on information theory}, 56(9):4539--4551,
  2010.

\bibitem[KKO15]{kokkalakrotovostergard}
J.~I. Kokkala, D.~S. Krotov, and P.~R.~J. \"{O}sterg{\aa}rd.
\newblock Classification of {MDS} codes over small alphabets.
\newblock In {\em Coding theory and applications}, volume~3 of {\em CIM Ser.
  Math. Sci.}, pages 227--235. Springer, Cham, 2015.

\bibitem[KO16]{kokkalaostergard}
J.~I. Kokkala and P.~R.~J. \"{O}sterg{\aa}rd.
\newblock Further results on the classification of {MDS} codes.
\newblock {\em Adv. Math. Commun.}, 10(3):489--498, 2016.

\bibitem[PVdV13]{penttilavandevoorde}
T.~Penttila and G.~Van~de Voorde.
\newblock Extending pseudo-arcs in odd characteristic.
\newblock {\em Finite Fields Appl.}, 22:101--113, 2013.

\bibitem[PZ03]{pieprzykzhang}
J.~Pieprzyk and X.-M. Zhang.
\newblock Ideal threshold schemes from {MDS} codes.
\newblock In {\em Information security and cryptology---{ICISC} 2002}, volume
  2587 of {\em Lecture Notes in Comput. Sci.}, pages 253--263. Springer,
  Berlin, 2003.

\bibitem[RVdV15]{rotteyvandevoorde}
S.~Rottey and G.~Van~de Voorde.
\newblock Pseudo-ovals in even characteristic and ovoidal {L}aguerre planes.
\newblock {\em J. Combin. Theory Ser. A}, 129:105--121, 2015.

\bibitem[Seg55]{segre55}
B.~Segre.
\newblock Ovals in a finite projective plane.
\newblock {\em Canadian J. Math.}, 7:414--416, 1955.

\bibitem[Tha19]{thas19}
J.~A. Thas.
\newblock Regular pseudo-hyperovals and regular pseudo-ovals in even
  characteristic.
\newblock {\em Innov. Incidence Geom.}, 17(2):77--84, 2019.

\bibitem[TTVM06]{thasthasvanmaldeghem}
J.~A. Thas, K.~Thas, and H.~Van~Maldeghem.
\newblock {\em Translation generalized quadrangles}, volume~26 of {\em Series
  in Pure Mathematics}.
\newblock World Scientific Publishing Co. Pte. Ltd., Hackensack, NJ, 2006.

\bibitem[TV91]{tsfasmanvladut}
M.~A. Tsfasman and S.~G. Vl\u{a}du\c{t}.
\newblock {\em Algebraic-geometric codes}, volume~58 of {\em Mathematics and
  its Applications (Soviet Series)}.
\newblock Kluwer Academic Publishers Group, Dordrecht, 1991.
\newblock Translated from the Russian by the authors.

\bibitem[VdV16]{vandevoorde16}
G.~Van~de Voorde.
\newblock Desarguesian spreads and field reduction for elements of the
  semilinear group.
\newblock {\em Linear Algebra Appl.}, 507:96--120, 2016.

\bibitem[WL13]{wuliu}
B.~Wu and Z.~Liu.
\newblock Linearized polynomials over finite fields revisited.
\newblock {\em Finite Fields Appl.}, 22:79--100, 2013.

\end{thebibliography}

\end{document}